\def\eE{\mathbb E}
\newcommand\independent{\protect\mathpalette{\protect\independenT}{\perp}}
\def\independenT#1#2{\mathrel{\rlap{$#1#2$}\mkern2mu{#1#2}}}
\newcommand*\fooA{\mathrel{-\mkern-3mu{\circ}\mkern-3mu-}}
\newtheorem{theorem}{Theorem}
\newtheorem{lemma}{Lemma}
\newtheorem{prop}{Proposition}
\newtheorem{defn}{Definition}
\theoremstyle{definition}\newtheorem{remark}{Remark}
\theoremstyle{definition}\newtheorem{example}{Example}
\newcommand{\be}{\begin{equation}}
\newcommand{\ee}{\end{equation}}
\newcommand{\ben}{\begin{equation*}}
\newcommand{\een}{\end{equation*}}
\newcommand{\ba}{\begin{eqnarray}}
\newcommand{\ea}{\end{eqnarray}}
\begin{document}
\title{Multi-terminal Strong Coordination subject to Secrecy Constraints}
\author{Viswanathan~Ramachandran, Tobias~J.~Oechtering and Mikael~Skoglund
\thanks{The authors are with the Division of Information Science and Engineering, School of Electrical Engineering and Computer Science, KTH Royal Institute of Technology, Stockholm, Sweden. (e-mails: \{visra,oech,skoglund\}@kth.se). Parts of this work have appeared in the proceedings of IEEE International Symposium on Information Theory (ISIT) 2024 as well as International Zurich Seminar (IZS) 2024.
}
}
\maketitle

\begin{abstract}
A fundamental problem in decentralized networked systems (such as for sensing or control) is to coordinate actions of different agents so that they reach a state of agreement. In such applications, it is additionally desirable that the actions at various nodes may not be anticipated by malicious eavesdroppers. Motivated by this, we investigate the problem of secure multi-terminal strong coordination aided by a multiple-access wiretap channel (MAC-WT). In this setup, independent and identically distributed (i.i.d.) copies of correlated sources are observed by two transmitters who encode the channel inputs to the MAC-WT. The legitimate receiver observing the channel output and side information correlated with the sources must produce approximately i.i.d. copies of an output variable jointly distributed with the sources. Furthermore, we demand that an external eavesdropper learns essentially nothing about the sources and the simulated output sequence by observing its own MAC-WT output. This setting is aided by the presence of independent pairwise shared randomness between each encoder and the legitimate decoder, that is unavailable to the eavesdropper. We derive an achievable rate region based on a combination of coordination coding and wiretap coding, along with an outer bound. The inner bound is shown to be tight and a complete characterization is derived for the special case when the sources are conditionally independent given the decoder side information and the legitimate channel is composed of deterministic links. Further, we also analyze a more general scenario with possible encoder cooperation, where one of the encoders can non-causally crib from the other encoder's input, for which an achievable rate region is proposed. We then explicitly compute the rate regions for an example both with and without cribbing between the encoders, and demonstrate that cribbing strictly improves upon the achievable rate region. 

\end{abstract}
\begin{IEEEkeywords}
Strong coordination, multiple access wiretap channel, strong secrecy, random binning, channel simulation, cribbing encoders, shared randomness.
\end{IEEEkeywords}

\section{Introduction}
Distributed simulation of correlated randomness across different nodes in a network is useful in several settings, like cryptographic protocols and distributed computation tasks~\cite{ahlswede1993common,ahlswede1998common}. The role of communication to establish remote correlation in the form of a desired joint distribution of actions among all nodes in a network was explored in the \emph{coordination capacity} framework of \cite{cuff2010coordination}. This is especially relevant in scenarios where distributed agents must achieve decentralized cooperation (for instance in autonomous vehicle applications). Indeed, future telecommunication networks will provide services beyond the traditional role of communication of information, for which the cooperation and coordination of constituent devices will be crucial. Coordination coding can be used to model such services and obtain fundamental performance bounds that will serve as valuable benchmarks. In particular, we shall focus on the notion of \emph{strong coordination}~\cite{cuff2010coordination}, which requires that the distribution of the sequence of agent/node actions in the network be close in total variation to a target distribution. 
Furthermore, due to the trend towards large-scale decentralized networks consisting of many mutually distrusting terminals, security and integrity of the coordinated actions are also of high priority in order to guarantee trustworthy operation. In this work, we therefore address strong coordination over a (noisy) multi-user channel by taking information-theoretic secrecy aspects into account. The fundamental limits derived and the insights drawn in this framework could be broadly applicable towards a better understanding of the information flows and efficient resource utilization in interconnected systems, such as for smart grid or telemedicine. 

\begin{figure}[ht]
\centering
\begin{tikzpicture}[thick]
\node (d1) at (-3,0) [rectangle, draw, right, minimum height=1.0cm, minimum width = 1.2cm]{Enc $1$};
\node (d2) at (3.6,-0.95) [rectangle, draw, right, minimum height=2.5cm, minimum width = 1.6cm]{Dec};
\draw[->] (d2) --++(1.5,0) node[right]{$Y^n$};
\draw[<->] (d1) to[out=60,in=120] node[midway, above] {$K_1 \in [1:2^{nR_{01}}]$} (d2);
\draw[<-] (d1) --++(-1.5,0) node[left]{$X_1^n$};
\node (e2) at (-3,-2) [rectangle, draw, right, minimum height=1.0cm, minimum width = 1.2cm]{Enc $2$};
\draw[<-] (e2) --++(-1.5,0) node[left]{$X_2^n$};
\draw[<-] (d2.80) --++(0,1.5) |- (-2.95,3.1) node[left]{$W^n$};
\node (ch) at (-0.2,-1) [rectangle, draw, right, minimum height=0.8cm, minimum width = 1.2cm]{$p(\tilde{y},\tilde z|\tilde{x}_1,\tilde{x}_2)$};
\draw[->] (d1) -- (ch) node[midway, above, sloped]{$\tilde{X}_1^n$};
\draw[->] (e2) -- (ch) node[midway, above, sloped]{$\tilde{X}_2^n$};
\draw[->] (ch) -- (d2) node[midway, above]{$\tilde{Y}^n$};
\node (eve) at (0.45,-2.4) [rectangle, draw, right, minimum height=0.8cm, minimum width = 1.2cm]{Eve};
\draw[->] (ch) -- (eve) node[midway, right]{$\tilde Z^n$};
\draw[<->] (e2) to[out=-60,in=-120] node[midway, below] {$K_2 \in [1:2^{nR_{02}}]$} (d2);
\end{tikzpicture}
\caption{Strong coordination over a MAC-WT subject to secrecy constraints. Encoder $j\in \{1,2\} $ has access to the source $X_j^n$ and the shared randomness index $K_j$, and encodes the channel input $\tilde{X}_j^n$ for transmission over a discrete-memoryless MAC-WT $p(\tilde{y},\tilde z|\tilde{x}_1,\tilde{x}_2)$. The channel outputs $(\tilde{Y}^n,\tilde Z^n)$, where $\tilde{Y}^n$ is observed by the legitimate receiver, while $\tilde Z^n$ is observed at the eavesdropper. The legitimate receiver also observes side information $W^n$, where $(X_{1i},X_{2i},W_i)$, $i=1,\dots,n$, are assumed to be i.i.d. according to $q_{X_1X_2W}$, given by nature. The decoder must simulate an output sequence $Y^n$ such that $(X_{1i},X_{2i},W_i,Y_i)$, $i=1,\dots,n$, are approximately i.i.d. according to the distribution $q_{X_1X_2WY}$. Moreover, the coordinated actions $(X_1^n,X_2^n,W^n,Y^n)$ must be independent of $\tilde Z^n$, i.e., a strong secrecy constraint must hold against the eavesdropper.} \label{fig:encSRnoisy}
\end{figure}
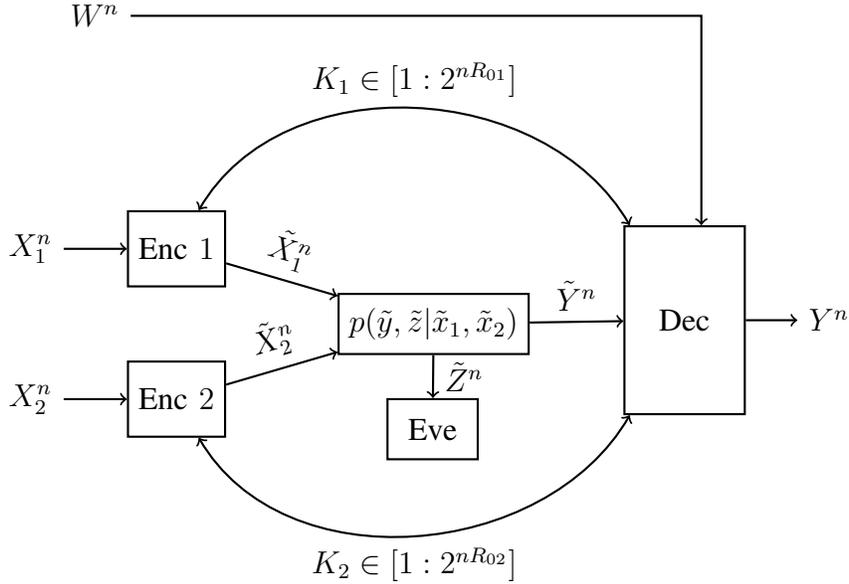

In this paper, we address secure distributed strong coordination via a MAC-WT. In particular, there are two encoders who encode the channel inputs to the MAC-WT, and a legitimate decoder who observes one of the outputs of the MAC-WT -- see Figure~\ref{fig:encSRnoisy}. There is also an external eavesdropper who observes the other output of the MAC-WT. The two encoders observe i.i.d. realizations of the sources $X_1$ and $X_2$ respectively, while the legitimate decoder observes i.i.d. realizations of the side information $W$, which are assumed to be drawn i.i.d. according to a given probability mass functions (p.m.f.) $q_{X_1X_2W}$. We will also separately analyze the case where the channel input $\tilde X_2^n$ is available non-causally to the first encoder, i.e., cribbing in the sense of~\cite[Situation 4]{willems1985discrete} is allowed (Section~\ref{UPDATEDsec:SMRnoisyK0}). Each encoder also has access to independent shared randomness with the legitimate decoder, which is independent of the sources and side information. Additionally, the encoders and the decoder are free to use (unconstrained) private randomization. The goal is to ensure that the decoder's output, along with the sources and side information, closely follows a target i.i.d. p.m.f. $q_{X_1X_2WY}$. Moreover, the secrecy requirement is such that the sources, side information and the simulated output sequence must appear to be independent of the eavesdropper channel observations. We discuss related prior work next, and explain the differences with respect to our model, before listing our key contributions.

Strong coordination aims to describe the minimal communication required to achieve remote correlation. An encoder observing an i.i.d. source $X^n$ with distribution $q_X$ transmits a message to a decoder through a noiseless link in the point-to-point formulation. The decoder's task is to produce a sequence $Y^n$ such that the total variation distance between the induced joint distribution of $(X^n,Y^n)$ and the i.i.d. joint distribution obtained by transmitting the source $X^n$ through a discrete memoryless channel $q_{Y|X}$ vanishes asymptotically with the blocklength. 
The encoder and the decoder may benefit from common randomness to accomplish this task. This paradigm has been studied under infinite common randomness~\cite{BennettSST02}, and a `reverse Shannon theorem' is established, where the minimum communication requirement is found to be $I(X;Y)$. Subsequently, the complete optimal trade-off region between communication and common randomness rates has been discovered in \cite{6757002} and \cite{cuff2013distributed}. Expanding on this, \cite{YassaeeGA15} obtained a complete characterization for a point-to-point network involving interactive communications between the nodes.

Building on the point-to-point network for strong coordination from \cite{cuff2013distributed}, multi-terminal extensions have received some attention in the literature. For instance, a cascade network is examined in \cite{SatpathyC16}, and the optimal trade-off between communication and common randomness rates is determined. Strong coordination over a multi-hop line network is addressed in \cite{vellambi2017strong}, using a channel-resolvability approach. A three-terminal generalization of \cite{cuff2013distributed} to include side information at the receiver is addressed in \cite{ramachandran2020strong}, where some tight characterizations are obtained for specific cases. Strong coordination over a multiple-access network of noiseless links is investigated in \cite{kurri2022multiple}, where a tight characterization is derived for independent sources. Further, the benefits of shared randomness amongst the encoders in reducing the communication needed for coordination is shown therein. Recently, a tight characterization is derived in \cite{RamachandranOSITW2024} for the same multiple-access coordination setup when the source observations are degraded. 

Strong coordination using noisy channels as a resource, rather than noiseless communication links as in \cite{cuff2013distributed}, was first explored in \cite{HaddadpourYAG13,HaddadpourYBGAA17}. For single-user as well as broadcast channel simulation, inner and outer bounds on the region of shared randomness rates were derived therein. A variant of \cite{HaddadpourYBGAA17} with an additional constraint of coordinating the channel input and output signals is addressed in \cite{CerviaLLB20}, where inner and outer bounds along with tight characterizations for specific cases (such as noiseless channels and lossless decoders) are derived. 

Strong coordination has also been explored in adversarial settings with secrecy constraints. For instance, a two-terminal noiseless setting with an external eavesdropper who taps into the communication between legitimate nodes and has access to correlated observations is investigated in \cite{gohari2012secure}, where an achievable scheme is derived. On a somewhat different note, strong coordination between two mutually distrusting users with no external adversaries is addressed in \cite{data2020interactive}, where the class of randomized functions which can be computed with perfect security is characterized. A point-to-point strong coordination setup over a noisy channel where the generated actions must be kept secret from an external eavesdropper is addressed in \cite{cervia2020secure}. The current work is a consolidated version of the conference papers \cite{RamachandranOSISIT2024}, which addressed an extension of \cite{kurri2022multiple} to noisy channels with secrecy constraints against an external eavesdropper, and \cite{RamachandranOSIZS2024}, which investigated a noiseless network version thereof.

The current paper extends channel simulation from noisy channels addressed in \cite{HaddadpourYBGAA17} to a three-terminal secrecy scenario with correlated sources, with the key difference being the presence of an external eavesdropper against whom strong secrecy is desired, apart from allowing for possible encoder cooperation (see Section~\ref{UPDATEDsec:SMRnoisyK0}). In other words, the correlated sources must be encoded over the MAC-WT so as to achieve strong coordination with the decoder output, while also ensuring secrecy against the eavesdropper. This can also be viewed as an extension of the multi-terminal noiseless network coordination problem addressed in \cite{kurri2022multiple} to the case of noisy resource channels, with additional secrecy constraints. Compared to \cite{gohari2012secure}, our setting explores secure strong coordination in a multi-terminal noisy channel setup rather than a point-to-point noiseless network with an eavesdropper. Our work also extends the single-user noisy channel secure coordination setting of \cite{cervia2020secure} to multi-user channels.

Ultimately, let us note that even with noiseless communication and/or in the absence of secrecy constraints, problems involving distributed encoding of correlated sources, such as the Berger-Tung source coding setup~\cite{berger1977multiterminal} or deterministic function computation settings such as~\cite{SefidgaranT11,SefidgaranT16} do not admit tight characterizations in general. For instance, \cite{Gastpar04} showed that for distributed lossy compression of correlated sources with decoder side information, a tight characterization results only when the sources are conditionally independent given the side information. Likewise, \cite{SefidgaranT11,SefidgaranT16} determined the optimal communication rates for computing deterministic functions (for which shared randomness does not help) in a noiseless multiple-access network only when the sources are conditionally independent given the side information. Otherwise, their inner bound is not tight in general, as illustrated by the K\"{o}rner-Marton problem~\cite{KornerM79} (see \cite[Example 2]{SefidgaranA11}). Since our coordination setup can also be viewed as distributed \emph{randomized} function computation, our problem appears even more challenging. Consequently, we also obtain tight results only for the special cases identified in the sequel, which is partly motivated by the assumptions required for tightness in the problems just mentioned.

\noindent \textbf{Main Contributions.}
Our main contributions are stated below.
\begin{itemize}
\item We derive an achievable region and an outer bound for the general case (refer Theorems~\ref{thm:encsideIBnoisy} and \ref{thm:encsideOBnoisy} in the sequel). Our achievable scheme is proved using a combination of coordination coding over noisy channels along with wiretap coding for secrecy, where the secrecy analysis is novel compared to prior works such as \cite{HaddadpourYAG13,HaddadpourYBGAA17}.
\item We also establish a complete characterization for this setting (see Theorem~\ref{thm:indepnoisy}) for the special case when the sources are conditionally independent given the decoder side information and the legitimate receiver's channel is composed of deterministic links. The non-trivial and challenging part lies in leveraging the deterministic legitimate channel and conditionally independent sources assumptions to obtain a single-letterization matching the inner bound, which is known to be difficult even in distributed source coding settings such as~\cite{berger1977multiterminal}.
\item Further, we analyze a more general scenario where one encoder is allowed to non-causally crib~\cite[Situation 4]{willems1985discrete} from the other encoder's input so as to enable encoder cooperation, for which an inner bound is proposed (see Theorem~\ref{thm:encsideIBnoisyK0}).
\item We then explicitly compute the regions for an example (Example~\ref{ex:1}) both with and without encoder cribbing and demonstrate that cribbing strictly improves upon the achievable region (see Section~\ref{sec:example}). The interesting part here lies in establishing the optimality of the auxiliary random variable choices for the setting without cribbing.
\end{itemize}

\textbf{Notations}: We denote random vectors as $A^n\triangleq (A_1, A_2 \cdots, A_n$), as is conventional in the information theory literature. Moreover, independence between two random variables $A$ and $B$ will be represented by $A \independent B$. Also, we use the notation $A \fooA B \fooA C$ for a Markov chain, in that $I(A;C|B)=0$.

The rest of this work is structured as follows. Section \ref{UPDATEDsec:SMRnoisy} introduces our system model, and Section \ref{secton:mainresults1noisy} presents the statements of our main results. 
The setting where the encoders are augmented with cribbing is presented in Section~\ref{UPDATEDsec:SMRnoisyK0}, while Section~\ref{sec:example} gives an example where such encoder cribbing helps compared to the optimal region without cribbing. The proofs of all the results are contained in Sections~\ref{app:pfThm1noisy} -- \ref{sec:exproof}. Finally, Section~\ref{sec:conc} concludes the paper with some closing remarks and future research avenues.

\section{System Model} \label{UPDATEDsec:SMRnoisy}
We investigate strong coordination of signals across three terminals over a MAC-WT with secrecy constraints. 
The setup comprises two encoders and a legitimate decoder which observe their respective source variables $(X_1^n,X_2^n,W^n)$, along with an eavesdropper. For $j \in \{1,2\}$, Encoder $j$ and the decoder can harness pairwise shared randomness $K_j$, assumed to be uniformly distributed on $[1:2^{nR_{0j}}]$. Encoder $j \in \{1,2\}$ (which observes $X_j^n$ and has access to $K_j$) encodes the channel input sequence $\tilde{X}_j^n$. A discrete-memoryless MAC-WT specified by $p(\tilde{y},\tilde z|\tilde{x}_1,\tilde{x}_2)$ maps the channel input sequences into an observation $\tilde{Y}^n$ at the legitimate receiver and $\tilde Z^n$ at the eavesdropper (who does not observe the shared randomness variables $(K_1,K_2)$). The variables $(X_{1i},X_{2i},W_i)$, $i=1,2,\ldots,n$, are assumed to be i.i.d. with joint distribution specified by nature as $q_{X_1X_2W}$. The random variables $X_1,X_2,W,\tilde{X}_1,\tilde{X}_2,\tilde{Y},\tilde Z$ assume values in finite alphabets $\mathcal{X}_1,\mathcal{X}_2,\mathcal{W},\mathcal{\tilde{X}}_1,\mathcal{\tilde{X}}_2,\mathcal{\tilde{Y}},\mathcal{\tilde{Z}}$, respectively. The shared randomness indices $K_1$ and $K_2$ are assumed to be independent of each other and of $(X_1^n,X_2^n,W^n)$. The legitimate decoder obtains $(K_1,K_2,W^n,\tilde{Y}^n)$ and simulates an output sequence $Y^n$ (where $Y_i$, $i=1,\dots,n,$ takes values in the finite alphabet $\mathcal{Y}$) which along with the input sources and side information must be approximately i.i.d. according to $q^{(n)}_{X_{1}X_{2}WY}(x_1^n,x_2^n,w^n,y^n):=\prod_{i=1}^n q_{X_1X_2WY}(x_{1i},x_{2i},w_i,y_i)$ (refer Figure~\ref{fig:encSRnoisy} for the details). Moreover, we require strong secrecy against the eavesdropper, i.e., the coordinated actions $(X_1^n,X_2^n,W^n,Y^n)$ must appear to be independent of the eavesdropper observations $\tilde Z^n$.

\begin{defn}\label{defn:codenoisy}
A $(2^{nR_{01}}, 2^{nR_{02}}, n)$ \emph{code} comprises two randomized encoders $p^{\emph{Enc}_j}(\tilde{x}_j^n|x_j^n,k_j)$ for $j \in \{1,2\}$ and a randomized decoder $p^{\emph{Dec}}(y^n|k_1,k_2,w^n,\tilde{y}^n)$, where $k_j\in[1:2^{nR_{0j}}]$, $j \in \{1,2\}$.
\end{defn}
The induced joint distribution of all random variables $(X_1^n,X_2^n,W^n,K_1,K_2,\tilde{X}_1^n,\tilde{X}_2^n,\tilde{Y}^n,\tilde Z^n,Y^n)$ and the resulting induced marginals on $(X_1^n,X_2^n,W^n,Y^n,\tilde Z^n)$ and $\tilde Z^n$ are respectively given by
\begin{align*}
p(x_1^n,x_2^n,w^n,k_1,k_2,\tilde{x}_1^n,\tilde{x}_2^n,\tilde{y}^n,\tilde z^n,y^n) &=\frac{1}{2^{n(R_{01}+R_{02})}}q(x_1^n,x_2^n,w^n)\prod_{j=1}^2 p^{\text{Enc}_j}(\tilde{x}_j^n|x_j^n,k_j) \\
& \hspace{24pt}\times p(\tilde{y}^n,\tilde z^n|\tilde{x}_1^n,\tilde{x}_2^n) p^{\text{Dec}}(y^n|k_1,k_2,w^n,\tilde{y}^n),
\end{align*}
and
\begin{align*}
&p^{\text{ind}}(x_1^n,x_2^n,w^n,y^n,\tilde z^n)=\sum_{k_1,k_2,\tilde{x}_1^n,\tilde{x}_2^n,\tilde{y}^n}p(x_1^n,x_2^n,w^n,k_1,k_2,\tilde{x}_1^n,\tilde{x}_2^n,\tilde{y}^n,\tilde z^n,y^n),\\
&p^{\text{ind}}(\tilde z^n)=\sum_{x_1^n,x_2^n,w^n,y^n}p^{\text{ind}}(x_1^n,x_2^n,w^n,y^n,\tilde z^n).
\end{align*}
The total variation distance between two p.m.f.'s, denoted as $p_X$ and $q_X$, defined on the same alphabet $\mathcal{X},$ is given by
\begin{align*}
||p_X-q_X||_1 \triangleq \frac{1}{2} \sum_{x\in\mathcal{X}} |p_X(x)-q_X(x)|.
\end{align*}

\begin{defn} \label{def:achnoisy}
A rate pair $(R_{01},R_{02})$ is said to be \emph{achievable for a target joint distribution} $q_{X_1X_2WY}$ \emph{with secrecy} provided there exists a sequence of $(2^{nR_{01}}, 2^{nR_{02}}, n)$ codes such that
\end{defn} 
\begin{align}\label{eqn:correctnessnoisy}
\lim_{n \to \infty} ||p^{\text{ind}}_{X_1^n,X_2^n,W^n,Y^n,\tilde Z^n}-p^{\text{ind}}_{\tilde Z^n} \cdot q^{(n)}_{X_1X_2WY}||_{1}=0,
\end{align}
where $q^{(n)}_{X_1X_2WY}$ is the target i.i.d. product p.m.f. defined as
\begin{align*}
&q^{(n)}_{X_1X_2WY}(x_1^n,x_2^n,w^n,y^n):=\prod_{i=1}^n q_{X_1X_2WY}(x_{1i},x_{2i},w_i,y_i).
\end{align*} 

The condition \eqref{eqn:correctnessnoisy} imposed on the joint distribution leads to strong secrecy against the eavesdropper provided that the total variation distance goes to zero exponentially in $n$, and we obtain
\begin{align}\label{eqn:secrecynoisy}
\lim_{n \to \infty} I(\tilde Z^n;X_1^n,X_2^n,W^n,Y^n)=0.
\end{align}


\begin{defn}\label{defn:newnoisy}
The \emph{rate region} $\mathcal{R}_{\textup{noisy-coord}}^{\textup{secrecy}}$ is the closure of the set of all achievable rate pairs $(R_{01},R_{02})$. 
\end{defn}
Let $\mathcal{R}_{\textup{noisy-coord, $R_{02} \to \infty$}}^{\textup{secrecy}}$ be the region when the pairwise shared randomness $K_2$ is unlimited, i.e., 
\begin{align}&\mathcal{R}_{\textup{noisy-coord, $R_{02} \to \infty$}}^{\textup{secrecy}}=\{R_{01} : \exists \ R_{02} \ \text{s.t.}\ (R_{01},R_{02}) \in \mathcal{R}_{\textup{noisy-coord}}^{\textup{secrecy}}\}.
\end{align}

\section{Main Results}  \label{secton:mainresults1noisy}
Firstly, let us present an inner bound to the rate region $\mathcal{R}_{\textup{noisy-coord}}^{\textup{secrecy}}$. In the following theorem, the underlying role played by the auxiliary random variables $U_{1}$ and $U_{2}$ is analogous to the auxiliary random variable in the point-to-point strong coordination setting over a noisy channel~\cite{HaddadpourYBGAA17}. Specifically, $U_{1}$ and $U_{2}$ are used to send source descriptions of $X_1$ and $X_2$ respectively, while $T$ is a time-sharing random variable. Furthermore, to ensure secrecy, we have two more auxiliary random variables $V_{1}$ and $V_{2}$ that create an implicit wiretap code and ensure the independence of the coordinated actions from the eavesdropper channel observations. The decoder then recovers the source descriptions and locally simulates $Y$ using a test channel (conditional distribution).
\begin{theorem}[Achievable Rate Region] \label{thm:encsideIBnoisy}
Given a target joint p.m.f. $q_{X_1X_2WY}$, the rate pair $(R_{01},R_{02})$ is in $\mathcal{R}_{\textup{noisy-coord}}^{\textup{secrecy}}$ provided
\begin{subequations}
\begin{align} 
I(V_{1};V_{2},\tilde{Y}|T) &\geq I(U_{1};X_1|U_{2},W,T) \label{eq:1a}\\
I(V_{2};V_{1},\tilde{Y}|T) &\geq I(U_{2};X_2|U_{1},W,T) \label{eq:1b}\\
I(V_{1},V_{2};\tilde{Y}|T) &\geq I(U_{1},U_{2};X_1,X_2|W,T) \label{eq:1c}\\
R_{01} &\geq I(U_{1};X_1,X_2,Y|W,T)\! -\!I(U_{1};U_{2}|W,T)+I(V_{1};\tilde Z|T)-I(V_{1};V_{2},\tilde{Y}|T) \label{eq:1d}\\
R_{02} &\geq I(U_{2};X_1,X_2,Y|W,T) \!-\!I(U_{1};U_{2}|W,T)+I(V_{2};\tilde Z|T)-I(V_{2};V_{1},\tilde{Y}|T) \label{eq:1e}\\
R_{01} &\geq I(U_{1};X_1,X_2,Y|W,T)+I(U_{2};X_2|U_{1},W,T)+I(V_{1};\tilde Z|T) \notag\\
&\hspace{12pt}-I(V_{1};\tilde{Y}|T)-I(V_{2};V_{1},\tilde{Y}|T) \label{eq:1f}\\
R_{02} &\geq I(U_{2};X_1,X_2,Y|W,T)+I(U_{1};X_1|U_{2},W,T)+I(V_{2};\tilde Z|T) \notag\\
&\hspace{12pt}-I(V_{2};\tilde{Y}|T)-I(V_{1};V_{2},\tilde{Y}|T) \label{eq:1g}\\
R_{01}+R_{02} &\geq I(U_{1},U_{2};X_1,X_2,Y|W,T)+I(V_{1},V_{2};\tilde Z|T)-I(V_{1},V_{2};\tilde{Y}|T) \label{eq:1h},
\end{align}
\end{subequations}
for some p.m.f. 
\begin{align}
p(x_1,x_2,w,t,u_{1},u_{2},v_{1},v_{2},\tilde x_1,\tilde x_2,\tilde y,\tilde z,y)&= p(x_1,x_2,w)p(t)\biggl(\prod_{j=1}^2 p(u_{j}|x_j,t)\biggr)p(v_{1}|t)p(\tilde x_1|v_{1},t) \notag\\
&\hspace{24pt} \times p(v_{2}|t)p(\tilde x_2|v_{2},t)p(\tilde y,\tilde z|\tilde x_1,\tilde x_2)p(y|u_{1},u_{2},w,t) \label{eq:pmfthm1noisy}
\end{align}
such that 
\begin{align*}
&\sum\limits_{u_{1},u_{2},v_{1},v_{2},\tilde x_1,\tilde x_2,\tilde y,\tilde z}  p(x_1,x_2,w,u_{1},u_{2},v_{1},v_{2},\tilde x_1,\tilde x_2,\tilde y,\tilde z,y|t) =q(x_1,x_2,w,y) \:\: \forall \:\: t.
\end{align*}
\end{theorem}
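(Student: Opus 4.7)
The plan is to prove achievability by combining joint source--channel random coding for coordination with a wiretap-type randomized outer layer, analyzed through the Output Statistics of Random Binning (OSRB) framework. We would generate two tiers of codebooks at each encoder: source-description codewords $U_j^n$ carrying the information about $X_j^n$ that the decoder needs in order to simulate $Y^n$, and channel codewords $V_j^n$ that drive the actual channel inputs $\tilde X_j^n$ and simultaneously play the role of wiretap codewords against the eavesdropper. Since the target joint in \eqref{eq:pmfthm1noisy} factors so that $V_j$ is independent of $(U_j,X_j)$ given $T$, the coupling between the source description and the channel input is not present in the pmf and must be created entirely through random bin assignments, with one layer of bins reserved for the shared-randomness indices $K_j$ and another layer for ``gluing'' $U_j^n$ to a particular $V_j^n$.

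Concretely, I would first sample a time-sharing sequence $T^n$ i.i.d.\ $\sim p_T$, and then, conditioned on $t^n$, generate for each $j\in\{1,2\}$: a collection of $V_j^n$ codewords i.i.d.\ $\sim \prod_i p(v_{ji}|t_i)$ indexed by a pair $(b_j,f_j)$, where $b_j$ absorbs the wiretap randomization and $f_j$ acts as the ``message''; and a collection of $U_j^n$ codewords i.i.d.\ $\sim \prod_i p(u_{ji}|t_i)$. Encoder $j$, given $(X_j^n,K_j)$, looks for a $U_j^n$ codeword jointly typical with $X_j^n$ whose randomized bin index equals $K_j$, uses a further random mapping to select $(b_j,f_j)$ from this codeword, and then transmits $\tilde X_j^n$ drawn i.i.d.\ from $p(\tilde x_j|v_j,t)$ driven by the chosen $V_j^n(b_j,f_j)$. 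The decoder jointly decodes $(V_1^n,V_2^n)$ from $(T^n,\tilde Y^n)$ using $(K_1,K_2)$ as side information, inverts the binning to recover $(U_1^n,U_2^n)$ jointly typical with $W^n$, and finally generates $Y_i \sim p(y|u_{1i},u_{2i},w_i,t_i)$.

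The analysis proceeds by comparing the induced joint distribution of all variables to an idealized i.i.d.\ distribution via a chain of total-variation bounds. One application of the soft-covering lemma to the effective channel $(V_1,V_2)\To \tilde Y$ produces the three resolvability inequalities \eqref{eq:1a}--\eqref{eq:1c}, ensuring that the legitimate receiver's observation matches the target marginal. A second application of soft covering to the channel $(V_1,V_2)\To \tilde Z$ yields the secrecy penalties $I(V_j;\tilde Z|T)$ and $I(V_1,V_2;\tilde Z|T)$ appearing in \eqref{eq:1d}--\eqref{eq:1h}: intuitively, enough per-encoder local randomness must be injected into the $V_j^n$ layer to saturate the eavesdropper's channel and render $\tilde Z^n$ close in total variation to its target marginal irrespective of the coordinated actions. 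The source-description binning for the extraction of shared randomness contributes the $I(U_j;X_1,X_2,Y|W,T)$ terms and the $-I(U_1;U_2|W,T)$ cross-terms. A Fourier--Motzkin elimination over the auxiliary rates of the various bin indices finally produces the stated region in $(R_{01},R_{02})$.

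The main obstacle will be the strong-secrecy analysis in this two-encoder wiretap setting, which is the novel ingredient highlighted in the contributions. Unlike the point-to-point noisy coordination work of \cite{HaddadpourYBGAA17} or the single-user secrecy work of \cite{cervia2020secure}, here $\tilde Z^n$ is a joint function of both $\tilde X_1^n$ and $\tilde X_2^n$, and we must show that the eavesdropper's view decouples from the \emph{entire} coordinated tuple $(X_1^n,X_2^n,W^n,Y^n)$ simultaneously, not merely from each encoder's message separately. This calls for a joint soft-covering argument over $(V_1^n,V_2^n)$, in which one carefully tracks how the random binning used for shared-randomness extraction and for $U_j$--to--$V_j$ gluing interacts with the randomness budget reserved for wiretap confusion, so that the marginal constraints \eqref{eq:1d}--\eqref{eq:1g} and the sum constraint \eqref{eq:1h} are simultaneously met and the strong secrecy condition \eqref{eqn:secrecynoisy} is obtained.
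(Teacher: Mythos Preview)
Your proposal is essentially the same OSRB-plus-Fourier--Motzkin approach the paper takes, and the secrecy analysis you anticipate is indeed the crux. Two small corrections are worth making. First, the paper avoids your separate ``gluing'' layer by binning the \emph{pair} $(U_j^n,V_j^n)$ directly into indices $(K_j,F_j)$; since $U_j\independent V_j$ under \eqref{eq:pmfthm1noisy}, the entropy bounds split additively (e.g., $H(U_1,V_1|X_1)=H(U_1|X_1)+H(V_1)$), which is cleaner than introducing extra indices $(b_j,f_j)$. Second, your reading of \eqref{eq:1a}--\eqref{eq:1c} as soft-covering for the legitimate channel is inverted: in the paper these arise by combining the OSRB near-uniformity constraints $R_{0j}+\tilde R_j \le H(U_j|X_j)+H(V_j)$ with the Slepian--Wolf decodability constraints $R_{0j}+\tilde R_j \ge H(U_j|U_{\bar j},W)+H(V_j|V_{\bar j},\tilde Y)$, and they express that the channel mutual information must dominate the source-description rate, i.e., a decodability requirement rather than a resolvability one. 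The secrecy step is as you outline, enforced via the additional constraints $\tilde R_j \le H(U_j|X_1,X_2,W,Y)+H(V_j|\tilde Z)$ so that the extra shared randomness $(F_1,F_2)$ can be fixed without disturbing the approximate independence of $(X_1^n,X_2^n,W^n,Y^n)$ from $\tilde Z^n$.
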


We note that constraints \eqref{eq:1a}--\eqref{eq:1c} ensure that the source descriptions can be successfully recovered at the decoder, while constraints \eqref{eq:1d}--\eqref{eq:1h} are the minimum rates of shared randomness needed for strong coordination with secrecy. A detailed proof of Theorem~\ref{thm:encsideIBnoisy} can be found in Section~\ref{app:pfThm1noisy}. We note that the independence between the source and channel variables in \eqref{eq:pmfthm1noisy} along with strong coordination of $(X_1^n,X_2^n,W^n,Y^n,\tilde Z^n)$ ensures the secrecy condition \eqref{eqn:secrecynoisy} against the eavesdropper.



\begin{remark} \label{rmk:noiselessMAC}
Consider the special case when the eavesdropper sees the same channel as the legitimate decoder, i.e. $\tilde Z=\tilde Y$. Further, let the memoryless MAC $p(\tilde y|\tilde x_1,\tilde x_2)$ consist of two independent point-to-point channels with capacities $R_1$ and $R_2$. Let $\tilde{X}_1$ (resp. $\tilde{X}_2$) have the capacity-achieving distribution for the first (resp. second) channel. In this case, taking $V_{1}=\tilde{X}_1$ and $V_{2}=\tilde{X}_2$ such that $\tilde X_1$ and $\tilde X_2$ are independent of $(U_1,U_2,X_1,X_2,W,Y)$, Theorem~\ref{thm:encsideIBnoisy} recovers the inner bound of \cite[Theorem 1]{RamachandranOSIZS2024}. The proof is detailed in Appendix~\ref{app:noiselessMAC}.
\end{remark}

We next present an outer bound to $\mathcal{R}_{\textup{noisy-coord}}^{\textup{secrecy}}$.
\begin{theorem}[Outer Bound] \label{thm:encsideOBnoisy}
Given a target p.m.f. $q_{X_1X_2WY}$, any rate pair $(R_{01},R_{02})$ in $\mathcal{R}_{\textup{noisy-coord}}^{\textup{secrecy}}$ satisfies, for every $\epsilon \in (0,\frac{1}{4}]$,
\begin{subequations}
\begin{align} 
I(\tilde{X}_1;\tilde{Y}|\tilde{X}_2,T) &\geq I(U_{1};X_1|U_{2},X_2,W,T) \label{eq:2a}\\
I(\tilde{X}_2;\tilde{Y}|\tilde{X}_1,T) &\geq I(U_{2};X_2|U_{1},X_1,W,T) \label{eq:2b}\\
I(\tilde{X}_1,\tilde{X}_2;\tilde{Y}|T) &\geq I(U_{1},U_{2};X_1,X_2|W,T) \label{eq:2c}\\
R_{01}+R_{02} &\geq I(U_{1},U_{2};X_1,X_2,Y|W,T)-H(Y|X_1,X_2,W,T)-I(\tilde{X}_1,\tilde{X}_2;\tilde{Y}|\tilde Z,T)-2g(\epsilon) \label{eq:2d}\\
R_{01} &\geq I(U_{1};X_1,X_2,Y|W,T)-H(Y|X_1,X_2,W,T)-I(\tilde{X}_1,\tilde{X}_2;\tilde{Y}|\tilde Z,T)-2g(\epsilon) \label{eq:2e}\\
R_{02} &\geq I(U_{2};X_1,X_2,Y|W,T)-H(Y|X_1,X_2,W,T)-I(\tilde{X}_1,\tilde{X}_2;\tilde{Y}|\tilde Z,T)-2g(\epsilon) \label{eq:2f},
\end{align}
\end{subequations}
with 
\begin{align*}
g(\epsilon)&=2\sqrt{\epsilon}\biggl(H_q(X_1,X_2,W,Y)+\log\frac{(|\mathcal{X}_1||\mathcal{X}_2||\mathcal{W}||\mathcal{Y}||\mathcal{\tilde Z}|)}{\epsilon}\biggr)
\end{align*}
(where $g(\epsilon) \to 0$ as $\epsilon \to 0$), for some p.m.f.  
\begin{align}
p(x_1,x_2,w,t,u_{1},u_{2},\tilde x_1,\tilde x_2,\tilde y,\tilde z,y)&= p(x_1,x_2,w)p(t)p(u_{1},u_{2}|x_1,x_2,t)p(\tilde x_1,\tilde x_2|t) \notag\\
&\hspace{18pt} \times p(\tilde y,\tilde z|\tilde x_1,\tilde x_2)p(y|u_{1},u_{2},w,t) \label{pmf:ob1noisy}
\end{align}
such that 
\begin{align}
&||p(x_1,x_2,w,y|t)-q(x_1,x_2,w,y)||_1 \leq \epsilon \: \textup{for all} \: t \notag.
\end{align} 
\end{theorem}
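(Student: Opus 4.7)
The plan is to start from a sequence of $(2^{nR_{01}}, 2^{nR_{02}}, n)$ codes satisfying the coordination condition \eqref{eqn:correctnessnoisy} with total-variation error at most $\epsilon$, introduce a time-sharing random variable $T$ uniform on $[1:n]$ (independent of all else), and identify per-letter auxiliaries, e.g., $U_{1i} = (K_1, W^{-i}, \tilde Y^{i-1})$ and $U_{2i} = (K_2, W^{-i}, \tilde Y^{i-1})$, possibly augmented with past channel variables, so that the induced single-letter joint distribution fits the structural form \eqref{pmf:ob1noisy}. The key structural requirement, the conditional independence $(\tilde X_1, \tilde X_2) \perp (X_1, X_2, U_1, U_2) \mid T$, is forced by absorbing past channel inputs/outputs into $T$ and invoking the fact that the encoders see only $(X_j^n, K_j)$ along with the independence between the $K_j$'s and the sources. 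The closeness $\|p^{\text{ind}} - p^{\text{ind}}_{\tilde Z^n} \cdot q^{(n)}\|_1 \to 0$ of \eqref{eqn:correctnessnoisy}, via Pinsker plus a Csisz\'ar--K\"orner-type entropy-continuity lemma, translates into the marginal constraint $\|p(x_1, x_2, w, y \mid t) - q(x_1, x_2, w, y)\|_1 \leq \epsilon$ and furnishes the $g(\epsilon)$ slack in \eqref{eq:2d}--\eqref{eq:2f}.

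For the MAC-style bounds \eqref{eq:2a}--\eqref{eq:2c}, the plan is a standard MAC converse: memorylessness gives
\begin{align*}
n\, I(\tilde X_1; \tilde Y \mid \tilde X_2, T) \;\geq\; I(\tilde X_1^n; \tilde Y^n \mid \tilde X_2^n),
\end{align*}
while the identified auxiliaries together with data processing through the encoder and decoder yield
\begin{align*}
I(\tilde X_1^n; \tilde Y^n \mid \tilde X_2^n) \;\geq\; I(K_1; X_1^n \mid K_2, X_2^n, W^n) - n\, g(\epsilon),
\end{align*}
where the last step uses that the decoder, from $(K_1, K_2, W^n, \tilde Y^n)$, produces $Y^n$ coordinated with $(X_1^n, X_2^n)$ so that $\tilde Y^n$ must approximately carry the required $X_1^n$-description. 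Standard chain-rule expansions then single-letterize the right-hand side to $n\, I(U_1; X_1 \mid U_2, X_2, W, T)$, and analogous steps dispatch \eqref{eq:2b}--\eqref{eq:2c}.

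For the shared-randomness bounds \eqref{eq:2d}--\eqref{eq:2f}, the plan is to start from $n R_{01} \geq H(K_1) \geq I(K_1; X_1^n, X_2^n, W^n, Y^n, \tilde Z^n)$ and split this via the chain rule into a ``coordination'' part and a ``leakage-vs-secrecy'' part. The coordination part single-letterizes to $I(U_1; X_1, X_2, Y \mid W, T) - H(Y \mid X_1, X_2, W, T) - g(\epsilon)$, where the $-H(Y \mid X_1, X_2, W, T)$ appears because the decoder's private randomization supplies that portion of $Y$ and therefore need not be drawn from $K_1$. The remaining term involves $I(K_1; \tilde Z^n \mid \cdots)$, which is bounded using the data-processing chain $K_1 \to (\tilde X_1^n, \tilde X_2^n) \to (\tilde Y^n, \tilde Z^n)$ and the strong-secrecy consequence $I(\tilde Z^n; X_1^n, X_2^n, W^n, Y^n) = o(n)$ derived from \eqref{eqn:correctnessnoisy}, giving a residual of $-n\, I(\tilde X_1, \tilde X_2; \tilde Y \mid \tilde Z, T)$ after single-letterization. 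The sum-rate bound \eqref{eq:2d} follows analogously by starting from $n(R_{01}+R_{02}) \geq H(K_1, K_2)$ and the individual bound \eqref{eq:2f} by symmetry.

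I expect the main obstacle to be the joint handling of coordination and secrecy: the secrecy consequence of \eqref{eqn:correctnessnoisy} must be coupled with entropy continuity tightly enough so that conditioning on $\tilde Z^n$ in various chain-rule expansions only costs a $g(\epsilon)$ slack that is constant (rather than $n$-scaling) in each of \eqref{eq:2d}--\eqref{eq:2f}, while the wiretap-style subtraction $-I(\tilde X_1, \tilde X_2; \tilde Y \mid \tilde Z, T)$ is coaxed out by a data-processing lower bound on $I(K_j; \tilde Z^n)$. A second, more mundane, subtlety is arranging the identification of $T$ and the $U_j$'s so that the structural Markov form \eqref{pmf:ob1noisy} holds exactly (not merely asymptotically), which pins down how much of the past of channel and side-information variables must be absorbed into $T$ versus left inside the $U_j$'s.
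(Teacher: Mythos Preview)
Your plan has the right high-level ingredients, but there is a genuine gap in how you derive the shared-randomness bounds, and your MAC-bound sketch does not work as written.

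For \eqref{eq:2d}--\eqref{eq:2f}, the starting point $nR_{01}\ge I(K_1;X_1^n,X_2^n,W^n,Y^n,\tilde Z^n)$ cannot produce the negative term $-I(\tilde X_1,\tilde X_2;\tilde Y|\tilde Z,T)$. Chain-ruling out $\tilde Z^n$ leaves a nonnegative piece $I(K_1;\tilde Z^n\mid X_1^n,X_2^n,W^n,Y^n)$; bounding a nonnegative quantity cannot manufacture a subtraction of a channel mutual information on the right-hand side of a \emph{lower} bound on $R_{01}$. The key move in the paper is to condition on $\tilde Y^n$ first: from $H(K_1,K_2|\tilde Y^n,W^n)\ge I(K_1,K_2;X_1^n,X_2^n,Y^n|\tilde Y^n,W^n)$ one chain-rules to $I(K_1,K_2,\tilde Y^n;X_1^n,X_2^n,Y^n|W^n)-I(\tilde Y^n;X_1^n,X_2^n,Y^n|W^n)$. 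It is this \emph{subtracted} term that, after adding and subtracting $I(\tilde Z^n;X_1^n,X_2^n,W^n,Y^n)$ and invoking secrecy to bound the latter by $ng(\epsilon)$, becomes $-I(\tilde Y^n;X_1^n,X_2^n,W^n|\tilde Z^n)-H(Y^n|X_1^n,X_2^n,W^n)$, and then, via the Markov chain $(X_1^n,X_2^n,W^n)\fooA(\tilde X_1^n,\tilde X_2^n)\fooA(\tilde Y^n,\tilde Z^n)$ and data processing, at least $-I(\tilde X_1^n,\tilde X_2^n;\tilde Y^n|\tilde Z^n)-H(Y^n|X_1^n,X_2^n,W^n)$. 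Without conditioning on $\tilde Y^n$ up front, there is no mechanism for this wiretap-style subtraction to appear.

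For \eqref{eq:2a}--\eqref{eq:2c}, the inequality you write, $I(\tilde X_1^n;\tilde Y^n|\tilde X_2^n)\ge I(K_1;X_1^n|K_2,X_2^n,W^n)-ng(\epsilon)$, is vacuous: $K_1\independent(X_1^n,X_2^n,W^n,K_2)$ forces the right-hand side to $-ng(\epsilon)$. The paper instead applies the functional representation lemma to write each $\tilde X_j^n$ as a deterministic function of $(K_j,X_j^n,\Theta_j)$ with private randomness $\Theta_j$, and then bounds the single quantity $I(X_1^n;\tilde Y^n|K_1,K_2,X_2^n,W^n,\Theta_1,\Theta_2)$ in two directions: downward it single-letterizes to $nI(U_1;X_1|U_2,X_2,W,T)$, and upward (since $\tilde X_2^n$ is now determined by the conditioning and the channel is memoryless) it is at most $nI(\tilde X_1;\tilde Y|\tilde X_2,T)$. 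The auxiliaries actually used are asymmetric, $U_{1i}=(K_1,X_{1,i+1}^n,\tilde Y^n,W_{\sim i})$ and $U_{2i}=(K_2,X_{2,i+1}^n)$; in particular the \emph{full} $\tilde Y^n$ sits inside $U_{1}$, which dovetails with conditioning on $\tilde Y^n$ in the rate derivation. The time-sharing variable $T$ is simply the uniform time index---no past channel variables are absorbed into it.
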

The main difference compared to the inner bound lies in the more general p.m.f. structure of the auxiliary random variables $(U_{1},U_{2})$ in relation to the source variables $(X_1,X_2)$, while a long Markov chain held in the inner bound due to distributed encoding. The proof details are given in Section~\ref{proof:OB-thm6}. In other words, the outer bound admits cooperation between the two auxiliary codebooks, while this is not achievable in general in the inner bound due to the distributed processing. This motivates the study of cribbing encoders in Section~\ref{UPDATEDsec:SMRnoisyK0}.

When the random variables $X_1$ and $X_2$ are conditionally independent given $W$, and the legitimate receiver's channel $p(\tilde{y}|\tilde{x}_1,\tilde{x}_2)$ is composed of deterministic links, i.e., $\tilde{Y}=(f_1(\tilde{X}_1),f_2(\tilde{X}_2))$ for deterministic maps $f_1(\cdot)$ and $f_2(\cdot)$, we can demonstrate the tightness of the inner bound in Theorem~\ref{thm:encsideIBnoisy} by obtaining a matching converse bound.
\begin{theorem}[Tight Characterization - Conditionally Independent Sources and Deterministic Legitimate Channel] \label{thm:indepnoisy}
Consider a target p.m.f. $q_{X_1X_2WY}$ such that $I(X_1;X_2|W)=0$, and also assume a legitimate channel for which $\tilde{Y}=(f_1(\tilde{X}_1),f_2(\tilde{X}_2)) \triangleq (\tilde Y_1,\tilde Y_2)$. Then $\mathcal{R}_{\textup{noisy-coord, $R_{02} \to \infty$}}^{\textup{secrecy}}$ is characterized by the set of rates $R_{01}$ such that 
\begin{subequations}
\begin{align} 
H(\tilde{Y}_1|T) &\geq I(U_{1};X_1|W,T) \label{eq:3a}\\
H(\tilde{Y}_2|T) &\geq I(U_{2};X_2|W,T) \label{eq:3b}\\
R_{01} &\geq I(U_{1};X_1,Y|X_2,W,T)-H(\tilde Y_1|\tilde Z,T) \label{eq:3c},
\end{align}
\end{subequations}
for some p.m.f. 
\begin{align}
p(x_1,x_2,w,t,u_{1},u_{2},\tilde x_1,\tilde x_2,\tilde z,y)&=p(w)p(x_1|w)p(x_2|w)p(t) \biggl(\prod_{j=1}^2 p(u_{j}|x_j,t)\biggr) \nonumber\\
&\hspace{24pt}\times p(\tilde x_1|t) p(\tilde x_2|t) p(\tilde z|\tilde x_1,\tilde x_2)p(y|u_{1},u_{2},w,t) \label{pmf:ob3noisy}
\end{align}
such that 
\begin{align*}
&\sum\limits_{u_{1},u_{2},\tilde x_1,\tilde x_2,\tilde z} p(x_1,x_2,w,u_{1},u_{2},\tilde x_1,\tilde x_2,\tilde z,y|t) =q(x_1,x_2,w,y),
\end{align*}
for all $t$, with $|\mathcal{U}_{1}| \leq |\mathcal{X}_1||\mathcal{X}_2||\mathcal{W}||\mathcal{Y}|$, $|\mathcal{U}_{2}| \leq |\mathcal{U}_{1}||\mathcal{X}_1||\mathcal{X}_2||\mathcal{W}||\mathcal{Y}|$ and $|\mathcal{T}| \leq 3$.
\end{theorem}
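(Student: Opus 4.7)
The plan is to prove Theorem~\ref{thm:indepnoisy} in two parts: achievability by a careful specialization of the general inner bound (Theorem~\ref{thm:encsideIBnoisy}), and a matching converse obtained by sharpening the outer bound (Theorem~\ref{thm:encsideOBnoisy}) using both structural assumptions -- the conditional independence $X_1 \perp X_2 | W$ and the deterministic legitimate channel $\tilde Y = (f_1(\tilde X_1), f_2(\tilde X_2))$.

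For achievability, I would instantiate Theorem~\ref{thm:encsideIBnoisy} with $U_j$ depending only on $(X_j, T)$, the wiretap auxiliaries identified as $V_j = \tilde Y_j$ (so $\tilde X_j$ is drawn from a reverse test channel $p(\tilde x_j | v_j, t)$ consistent with $\tilde Y_j = f_j(\tilde X_j)$), and the channel inputs $(\tilde X_1, \tilde X_2)$ conditionally independent given $T$. Under $X_1 \perp X_2 | W$ one gets $I(U_1; U_2 | W, T) = 0$ and hence $I(U_1; X_1 | U_2, W, T) = I(U_1; X_1 | W, T)$, so \eqref{eq:1c} is implied by \eqref{eq:1a}+\eqref{eq:1b}. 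The deterministic structure together with $\tilde X_1 \perp \tilde X_2 | T$ gives $I(V_j; V_{-j}, \tilde Y | T) = H(\tilde Y_j | T)$ and $I(V_j; \tilde Z | T) = I(\tilde Y_j; \tilde Z | T)$. Plugging these in, \eqref{eq:1a}--\eqref{eq:1b} collapse to \eqref{eq:3a}--\eqref{eq:3b}, \eqref{eq:1d} reduces to exactly \eqref{eq:3c}, \eqref{eq:1f} is subsumed by \eqref{eq:3b}+\eqref{eq:3c}, and the constraints involving $R_{02}$ become vacuous as $R_{02} \to \infty$.

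For the converse, I would take Theorem~\ref{thm:encsideOBnoisy} as the starting point; the deterministic channel turns $I(\tilde X_1; \tilde Y | \tilde X_2, T)$ into the entropy $H(\tilde Y_1 | \tilde X_2, T)$, and $I(\tilde X_1, \tilde X_2; \tilde Y | \tilde Z, T)$ into an $H(\tilde Y_1, \tilde Y_2 | \tilde Z, T)$-type quantity that will be split to produce the $H(\tilde Y_1 | \tilde Z, T)$ correction in \eqref{eq:3c}. The sharpened $U_j$-structure of \eqref{pmf:ob3noisy} (namely $U_j$ depending only on $(X_j, T)$ rather than jointly on both sources) requires exploiting $X_1 \perp X_2 | W$ together with the distributed encoding $\tilde X_j^n = g_j(X_j^n, K_j, \omega_j)$ and the independence $(K_1, \omega_1) \perp (K_2, \omega_2)$. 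A suitable identification is $J$ uniform on $[1{:}n]$ with
\begin{align*}
T_i \defeq (J, W^{i-1}, W_{i+1}^n, \tilde Z_{i+1}^n, Y^{i-1}),\quad U_{1i} \defeq (K_1, T_i, \tilde Y_1^{i-1}),\quad U_{2i} \defeq (K_2, T_i, \tilde Y_2^{i-1}),
\end{align*}
after which \eqref{eq:3c} is extracted from $nR_{01} \geq H(K_1) = H(K_1 | X_2^n, W^n, \tilde Z^n) + I(K_1; \tilde Z^n | X_2^n, W^n)$, chain-rule single-letterization, and identification of the secrecy-correction term with $n \cdot I(\tilde Y_1; \tilde Z | T) + o(n)$ using the strong secrecy \eqref{eqn:secrecynoisy}. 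Bounds \eqref{eq:3a}--\eqref{eq:3b} drop out of the $\tilde Y_j$-entropy specialization of \eqref{eq:2a}--\eqref{eq:2b}, and the cardinality bounds follow from standard Fenchel--Eggleston / Carath\'eodory arguments.

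The main obstacle is enforcing the factorization $p(\tilde x_1 | t) p(\tilde x_2 | t)$ of the single-letter converse p.m.f. -- namely, conditional independence of the channel inputs given $T$ together with $\tilde X_j \perp (X_1, X_2, W, Y) | T$. The distributed encoder structure only directly yields $\tilde X_1^n \perp \tilde X_2^n | W^n$, and translating this into a single-letter independence statement while simultaneously maintaining $T \perp (X_{1J}, X_{2J}, W_J)$ (as required by \eqref{pmf:ob3noisy}) is delicate. This is essentially the same obstruction that prevents tight characterizations in the distributed source-coding problems mentioned in the introduction (Berger--Tung, K\"orner--Marton), and it is exactly the combination of conditional independence of the sources \emph{and} deterministic legitimate links that is just enough to close the gap here.
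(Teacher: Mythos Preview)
Your achievability sketch matches the paper's: specialize Theorem~\ref{thm:encsideIBnoisy} with $V_j=\tilde Y_j$, use $X_1\perp X_2\mid W$ to kill $I(U_1;U_2|W,T)$, and let $R_{02}\to\infty$.

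The converse, however, has a real gap. First, the paper explicitly remarks that the converse for Theorem~\ref{thm:indepnoisy} does \emph{not} follow from Theorem~\ref{thm:encsideOBnoisy}; a fresh single-letterization is needed that directly yields the product structure $p(u_1|x_1,t)p(u_2|x_2,t)$. The paper takes $T$ to be just the time index and
\[
U_{1i}=(K_1,X_{1,i+1}^n,\tilde Y_1^n,W_{\sim i}),\qquad U_{2i}=(K_2,X_{2,i+1}^n,\tilde Y_2^n),
\]
then starts the $R_{01}$ bound from $H(K_1|\tilde Y_1^n,X_2^n,W^n)$, not from conditioning on $\tilde Z^n$. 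The point is that $\tilde Y_1^n=f_1(\tilde X_1^n)$ is a function of $(K_1,X_1^n,\Theta_1)$ alone, so $U_{1i}$ is built only from encoder~1's resources (and $W_{\sim i}$), which is exactly what gives $U_1\fooA X_1\fooA(X_2,W)$ under the conditional-independence hypothesis.

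Your identification $T_i=(J,W_{\sim i},\tilde Z_{i+1}^n,Y^{i-1})$ breaks the required factorization $p(w)p(x_1|w)p(x_2|w)p(t)$: both $Y^{i-1}$ and $\tilde Z_{i+1}^n$ depend on the entire block $(X_1^n,X_2^n)$ through the encoders and the channel, so $T_i$ is in general correlated with $(X_{1i},X_{2i})$. Likewise, putting $\tilde Z$ into $T$ (or into the conditioning for the $R_{01}$ chain) obstructs the split of the eavesdropper correction into a clean $H(\tilde Y_1|\tilde Z,T)$ term; the paper gets this term via $H(\tilde Y_1^n|\tilde Z^n)\le\sum_i H(\tilde Y_{1i}|\tilde Z_i)$ after first isolating $I(\tilde Y_1^n;X_1^n,Y^n|\tilde Z^n,X_2^n,W^n)$ using strong secrecy. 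Finally, the cardinality bounds on $(U_1,U_2)$ do not come from a plain Carath\'eodory/support-lemma argument, because the long Markov chain $U_1\fooA X_1\fooA X_2\fooA U_2$ and $Y\fooA(U_1,U_2,W)\fooA(X_1,X_2)$ must be preserved; the paper uses the Gohari--Anantharam perturbation method for this.
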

The achievability follows from Theorem~\ref{thm:encsideIBnoisy} by enforcing the conditional independence $p(x_1,x_2,w) = p(w)p(x_1|w)p(x_2|w)$ and the deterministic condition on the legitimate receiver's channel $\tilde{Y}=(f_1(\tilde{X}_1),f_2(\tilde{X}_2))$, along with sufficiently large shared randomness rate $R_{02}$. We also make the choices $V_{1}=\tilde{Y}_1$ and $V_{2}=\tilde{Y}_2$ in Theorem~\ref{thm:encsideIBnoisy}. A detailed proof of the converse is given in Section~\ref{proof:conv-thm5}. We note that the converse for this case does not follow from the outer bound in Theorem~\ref{thm:encsideOBnoisy}, and needs a stronger outer bound which matches the p.m.f. structure in the inner bound.

\begin{figure}[ht]
\centering
\begin{tikzpicture}[thick]
\node (d1) at (-3,0) [rectangle, draw, right, minimum height=1.0cm, minimum width = 1.2cm]{Enc $1$};
\node (d2) at (3.6,-0.95) [rectangle, draw, right, minimum height=2.5cm, minimum width = 1.6cm]{Dec};
\draw[->] (d2) --++(1.5,0) node[right]{$Y^n$};
\draw[<->] (d1) to[out=60,in=120] node[midway, above] {$K_1 \in [1:2^{nR_{01}}]$} (d2);
\draw[<-] (d1) --++(-1.5,0) node[left]{$X_1^n$};
\node (e2) at (-3,-2) [rectangle, draw, right, minimum height=1.0cm, minimum width = 1.2cm]{Enc $2$};
\draw[<-] (e2) --++(-1.5,0) node[left]{$X_2^n$};
\draw[<-] (d2.80) --++(0,1.5) |- (-2.95,3.1) node[left]{$W^n$};
\node (ch) at (-0.2,-1) [rectangle, draw, right, minimum height=0.8cm, minimum width = 1.2cm]{$p(\tilde{y},\tilde z|\tilde{x}_1,\tilde{x}_2)$};
\draw[->] (d1) -- (ch) node[midway, above, sloped]{$\tilde{X}_1^n$};
\draw[->] (e2) -- (ch) node[midway, below, sloped]{$\tilde{X}_2^n$};
\draw[->] (ch) -- (d2) node[midway, above]{$\tilde{Y}^n$};
\node (eve) at (0.45,-2.4) [rectangle, draw, right, minimum height=0.8cm, minimum width = 1.2cm]{Eve};
\draw[->] (ch) -- (eve) node[midway, right]{$\tilde Z^n$};
\draw[<->] (e2) to[out=-60,in=-120] node[midway, below] {$K_2 \in [1:2^{nR_{02}}]$} (d2);
\node (j) at (-1.2,-1.8) [rectangle]{};
\draw[->] (j) |- (-1.2,-1.2) to (-2.35,-1.2) -- (d1.south);
\end{tikzpicture}
\caption{Strong coordination over a MAC-WT subject to secrecy constraints, with encoder cribbing. The input $\tilde X_2^n$ is non-causally available to the first encoder so that encoder cooperation is facilitated.} \label{fig:encSRnoisycrib}
\end{figure}
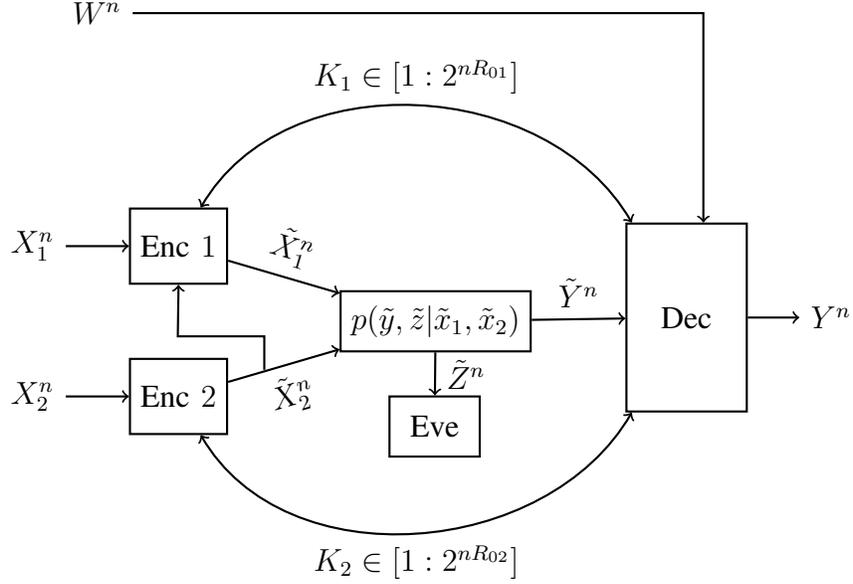 
\section{Secure Strong Coordination with Encoder Cribbing} \label{UPDATEDsec:SMRnoisyK0}
In this section, we present our results in the broader context where one of the encoders is allowed to crib~\cite[Situation 4]{willems1985discrete} from the other encoder's input non-causally -- see Fig.~\ref{fig:encSRnoisycrib}. This will facilitate cooperation between the encoders, in that Enc~$1$ can build its codebooks conditioned on the knowledge of the codebooks from Enc~$2$. In particular, Enc $2$ (which observes $X_2^n$ and has access to $K_2$) first generates the channel input sequence $\tilde{X}_2^n$. Then, Enc $1$ (which observes $X_1^n$ and has access to $K_1$ as well as $\tilde X_2^n$) creates the channel input sequence $\tilde{X}_1^n$.
A code, an achievable rate pair, and the rate region can be defined analogously as before. In particular, the code and an achievable rate pair can be defined similar to Definitions~\ref{defn:codenoisy} and \ref{def:achnoisy} by changing the map at Encoder~$1$ to $p^{\text{Enc}_1}(\tilde{x}_1^n|x_1^n,k_1,\tilde x_2^n)$. The \emph{rate region} $\mathcal{R}_{\textup{noisy-coord}}^{\textup{secrecy, crib}}$ is the closure of the set of all achievable rate tuples $(R_{01},R_{02})$ in the presence of cribbing.

We now present an inner bound to the region $\mathcal{R}_{\textup{noisy-coord}}^{\textup{secrecy, crib}}$.
\begin{theorem}[Achievable Rate Region with Cribbing amongst the Encoders] \label{thm:encsideIBnoisyK0}
Given a target joint p.m.f. $q_{X_1X_2WY}$, the rate pair $(R_{01},R_{02})$ is in $\mathcal{R}_{\textup{noisy-coord}}^{\textup{secrecy, crib}}$ provided
\begin{subequations}
\begin{align} 
I(U_{1};X_1|U_{2},W,T) &\leq I(V_{1};V_{2},\tilde{Y}|T)-I(V_1;\tilde X_2|T) \label{eq:4a}\\
I(U_{2};X_2|T)-I(U_2;U_1,W|T) &\leq I(V_{2};V_{1},\tilde{Y}|T) \label{eq:4b}\\
I(U_{1};X_1|U_{2},W,T)+I(U_{2};X_2|W,T) &\leq I(V_{1};V_{2},\tilde{Y}|T)+I(V_{2};\tilde{Y}|T) -I(V_1;\tilde X_2|T) \label{eq:4c}\\
R_{01} &\geq I(U_{1};X_1,X_2,Y|W,T)-I(U_{1};U_{2}|W,T) \notag\\
&\hspace{12pt}+I(V_{1};\tilde Z|T)-I(V_{1};V_{2},\tilde{Y}|T) \label{eq:4d}\\
R_{02} &\geq I(U_{2};X_1,X_2,Y|W,T)-I(U_{1};U_{2}|W,T) \notag\\
&\hspace{12pt}+I(V_{2};\tilde Z|T)-I(V_{2};V_{1},\tilde{Y}|T) \label{eq:4e}\\
R_{01} &\geq I(U_{1};X_1,X_2,Y|W,T)+I(U_{2};X_2|T) \notag\\
&\hspace{12pt}-I(U_2;U_1,W|T)+I(V_{1};\tilde Z|T)-I(V_{1};\tilde{Y}|T) \notag\\
&\hspace{12pt}-I(V_{2};V_{1},\tilde{Y}|T) \label{eq:4f}\\
R_{02} &\geq I(U_{2};X_1,X_2,Y|W,T)+I(U_{1};X_1|U_{2},W,T) \notag\\
&\hspace{12pt}+I(V_{2};\tilde Z|T)-I(V_{2};\tilde{Y}|T) \notag\\
&\hspace{12pt}+I(V_1;\tilde X_2|T)-I(V_{1};V_{2},\tilde{Y}|T) \label{eq:4g}\\
R_{01}+R_{02} &\geq I(U_{1},U_{2};X_1,X_2,Y|W,T) \notag\\
&\hspace{12pt}+I(V_{1},V_{2};\tilde Z|T)-I(V_{1},V_{2};\tilde{Y}|T) \label{eq:4h},
\end{align}
\end{subequations}
for some p.m.f. 
\begin{align}
p(x_1,x_2,w,t,u_{1},u_{2},v_{1},v_{2},\tilde x_1,\tilde x_2,\tilde y,\tilde z,y)&=p(x_1,x_2,w)p(t) p(u_{2}|x_2,t)p(u_{1}|x_1,u_2,t) \notag\\
&\hspace{24pt} \times p(v_{2}|t)p(\tilde x_2|v_{2},t) p(v_{1}|\tilde x_2,t)p(\tilde x_1|v_{1},\tilde x_2,t)  \notag\\
&\hspace{24pt} \times p(\tilde y,\tilde z|\tilde x_1,\tilde x_2)p(y|u_{1},u_{2},w,t) \label{eq:pmfthm1noisycrib}
\end{align}
such that 
\begin{align*}
&\sum\limits_{u_{1},u_{2},v_{1},v_{2},\tilde x_1,\tilde x_2,\tilde y,\tilde z}  p(x_1,x_2,w,u_{1},u_{2},v_{1},v_{2},\tilde x_1,\tilde x_2,\tilde y,\tilde z,y|t)  =q(x_1,x_2,w,y) \:\: \forall \:\: t.
\end{align*}
\end{theorem}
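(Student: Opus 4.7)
The plan is to establish Theorem~\ref{thm:encsideIBnoisyK0} by adapting the achievability scheme of Theorem~\ref{thm:encsideIBnoisy} to exploit the cribbing link. Conceptually, cribbing changes the joint distribution in \eqref{eq:pmfthm1noisycrib} in two places: the source description $U_1$ is now allowed to be superimposed on $U_2$ (through $p(u_1|x_1,u_2,t)$), and the channel codeword $V_1$ is superimposed on $\tilde{X}_2$ (through $p(v_1|\tilde x_2,t)$ with $\tilde X_1$ generated from $(V_1,\tilde X_2)$). I would therefore reuse the output-statistics-of-random-binning (OSRB) framework used for Theorem~\ref{thm:encsideIBnoisy}, but with two superposition modifications that introduce the new mutual-information terms seen in \eqref{eq:4a}--\eqref{eq:4h}, in particular $I(V_1;\tilde X_2|T)$ and $I(U_2;X_2|T)-I(U_2;U_1,W|T)$.

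Concretely, I would construct the codebooks by first sampling a time-sharing sequence $T^n$ i.i.d.\ from $p(t)$, then generating (i) codewords $U_2^n$ i.i.d.\ from $\prod p(u_{2i}|t_i)$ and, for each $U_2^n$, superimposed codewords $U_1^n$ from $\prod p(u_{1i}|u_{2i},t_i)$; (ii) codewords $V_2^n$ i.i.d.\ from $\prod p(v_{2i}|t_i)$, with $\tilde X_2^n$ drawn from $\prod p(\tilde x_{2i}|v_{2i},t_i)$; and (iii) for each realized $\tilde X_2^n$, superimposed codewords $V_1^n$ from $\prod p(v_{1i}|\tilde x_{2i},t_i)$, with $\tilde X_1^n$ from $\prod p(\tilde x_{1i}|v_{1i},\tilde x_{2i},t_i)$. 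Each codeword is randomly binned. The bin indices are partitioned into: shared-randomness indices for $K_1,K_2$; indices that travel over the MAC to the decoder and determine the selected codewords (subject to joint-typicality decoding of $(V_1^n,V_2^n)$ from $\tilde Y^n$); wiretap indices sacrificed for secrecy against Eve (large enough to randomize Eve's view of $V_1^n,V_2^n$); and the source-binning indices used by the encoders to pick $U_j^n$ jointly typical with $X_j^n$ (and, in Encoder~1's case, with the cribbed $\tilde X_2^n$, from which $U_2^n$ can be jointly inferred by conditional typicality). Encoder~2 proceeds as in Theorem~\ref{thm:encsideIBnoisy}. Encoder~1, having cribbed $\tilde X_2^n$, first identifies $U_2^n$, then chooses $U_1^n$ superimposed on $U_2^n$ and jointly typical with $X_1^n$, and finally selects $V_1^n$ superimposed on $\tilde X_2^n$ and transmits $\tilde X_1^n$.

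The rate constraints arise as follows. The conditional resolvability/joint-typicality conditions for reliably conveying the chosen $(V_1^n,V_2^n)$ over the MAC to the decoder—and for $V_1^n$ being drawn from the conditional $V_1|\tilde X_2$ codebook—yield \eqref{eq:4a}--\eqref{eq:4c}, where the extra $-I(V_1;\tilde X_2|T)$ quantifies the ``free'' correlation that cribbing provides. The Wyner-style channel-resolvability analysis for the source descriptions, together with the superposition $U_1|U_2$ structure, converts into the $I(U_1;X_1|U_2,W,T)$ and $I(U_2;X_2|T)-I(U_2;U_1,W|T)$ quantities. Secrecy is enforced by choosing the $V_1,V_2$ wiretap-coding rates larger than $I(V_1;\tilde Z|T)$ and $I(V_2;\tilde Z|T)$ (jointly $I(V_1,V_2;\tilde Z|T)$), which produces the $+I(V_j;\tilde Z|T)$ summands in \eqref{eq:4d}--\eqref{eq:4h}. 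Combining the wiretap and coordination bin-rate requirements with the shared-randomness budgets $R_{01},R_{02}$ and performing Fourier--Motzkin elimination of the internal bin rates recovers \eqref{eq:4d}--\eqref{eq:4h}. Correctness (closeness of the induced joint law to i.i.d.\ $q_{X_1X_2WY}$) and the joint strong-secrecy claim follow from the OSRB vanishing total-variation lemmas, applied simultaneously to the source-coordination layer and the channel-coding layer, exactly as in the proof of Theorem~\ref{thm:encsideIBnoisy}.

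I expect the main obstacle to be the \emph{joint} analysis of coordination and secrecy in the presence of the two superposition layers. The $V_1|\tilde X_2$ conditioning means that channel-resolvability and wiretap-resolvability bounds must be computed for a conditional codebook, and the secrecy bin sizes must be chosen so that Eve's view of $(V_1^n,V_2^n)$ is near-uniform \emph{even after marginalizing out the cribbed $\tilde X_2^n$}; this is what forces the $I(V_1;\tilde X_2|T)$ correction to appear symmetrically in \eqref{eq:4a}, \eqref{eq:4c}, and \eqref{eq:4g}. A second delicate point is showing that the source-description layer (with $U_1$ superimposed on $U_2$) and the channel-codeword layer (with $V_1$ superimposed on $\tilde X_2$) can be bin-coupled consistently so that $K_1,K_2$ serve both layers simultaneously without double counting; this is handled by the same change-of-measure / chain-rule arguments as in the non-cribbing proof, but with the Markov structure of \eqref{eq:pmfthm1noisycrib} replacing the long Markov chain used there.
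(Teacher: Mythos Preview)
Your overall plan---OSRB with two superposition layers, $U_1$ on $U_2$ and $V_1$ on $\tilde X_2$, followed by Fourier--Motzkin elimination---matches the paper's approach. The paper's proof is terse: it simply replaces the uniformity constraints \eqref{eq:cond11noisy}--\eqref{eq:cond13noisy} of Theorem~\ref{thm:encsideIBnoisy} by the two conditions $R_{01}+\tilde R_1 \le H(U_1,V_1|X_1,U_2,\tilde X_2)=H(U_1|X_1,U_2)+H(V_1|\tilde X_2)$ and $R_{02}+\tilde R_2 \le H(U_2,V_2|X_2)=H(U_2|X_2)+H(V_2)$, keeps the Slepian--Wolf and secrecy constraints \eqref{eq:cond21noisy}--\eqref{eq:cond23noisy}, \eqref{eq:cond31secnoisy}--\eqref{eq:cond33secnoisy} unchanged, and performs FME.

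There is, however, one genuine conceptual slip in your operational description. You say Encoder~1, after cribbing $\tilde X_2^n$, ``first identifies $U_2^n$'' by conditional typicality and then picks $U_1^n$ superimposed on it. This cannot work: in the p.m.f.\ \eqref{eq:pmfthm1noisycrib}, the source-description block $(U_1,U_2,X_1,X_2,W,Y)$ and the channel block $(V_1,V_2,\tilde X_1,\tilde X_2,\tilde Y,\tilde Z)$ are \emph{independent} given $T$, so $\tilde X_2^n$ carries no information whatsoever about $U_2^n$. The cribbed $\tilde X_2^n$ enables only the $V_1|\tilde X_2$ superposition on the \emph{channel} side; it does not give Encoder~1 access to $U_2^n$. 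The paper's OSRB argument sidesteps this entirely: in the random coding protocol Encoder~1 draws $(U_1^n,V_1^n)$ from the conditional $P(u_1^n,v_1^n\,|\,k_1,f_1,x_1^n,\tilde x_2^n)$ inherited from the binning protocol, \emph{without} ever knowing $U_2^n$. The $U_1|U_2$ superposition lives only in the binning-protocol reference measure, and it is the tighter uniformity condition---requiring $(K_1,F_1)$ to be nearly uniform conditioned additionally on $U_2^n$---that makes the two protocols coincide. So you should drop the ``recover $U_2^n$'' step and rely directly on the OSRB change-of-measure argument, as the paper does; with that correction your route is the same as the paper's.
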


The key difference compared to Theorem~\ref{thm:encsideIBnoisy} is that Enc~$1$ can now build its codebooks conditioned on the knowledge of the codebooks from Enc~$2$, and this facilitates encoder cooperation. In particular, note that in the right-hand sides of the inequalities \eqref{eq:4a}, \eqref{eq:4c}, and \eqref{eq:4g}, the mutual information terms associated with the first user can now depend upon $\tilde X_2$. For a detailed proof, please refer to Section~\ref{proof:thm4}.


\section{Example: Cribbing Helps for Secure Channel Simulation} \label{sec:example}
In this section, we show with the help of an example that in the presence of cribbing between the encoders, the achievable region can be improved upon. The setting with encoder cribbing considered here resembles rate-distortion for correlated sources with partially separated encoders~\cite{kaspi1982rate}, wherein one of the encoders is supplied with partial information about the other encoder's source observation. Our illustration will be in the context of Theorem~\ref{thm:indepnoisy}, where we obtained a closed characterization for conditionally independent sources and deterministic legitimate channels (without encoder cribbing). Accordingly, we first compute the region of Theorem~\ref{thm:indepnoisy} (without encoder cribbing) for this example. We then show that the region is improved in the presence of encoder cribbing.  
\begin{example} \label{ex:1}
Let $X_1 \in \{0,1\}$ and $X_2 \in \{0,1\}$ be independent and uniform binary random variables. Suppose the channel to be simulated $q_{Y|X_1,X_2}$ is such that 
\begin{align}
Y = \begin{cases}
X_1 &\text{if $X_2=1$}\\
\text{`e'} &\text{if $X_2=0$},
\end{cases}
\end{align}
where the symbol `e' stands for an erasure.
For the sake of simplicity, we let $W=\varnothing$ and assume a perfect legitimate channel $\tilde Y=(\tilde X_1,\tilde X_2)$, i.e., the maps $f_1(\cdot)$ and $f_2(\cdot)$ in Theorem~\ref{thm:indepnoisy} are identities. Suppose that the eavesdropper's channel is specified by $\tilde Z=(\tilde X_1,\tilde X_2 \oplus N)$, where $N \sim \textup{Bern}(p)$, $p \in [0,\frac{1}{2}]$ and $N \independent (\tilde X_1,\tilde X_2)$. We focus on the required values of $H(\tilde X_1,\tilde X_2)$ and $R_{01}$ (under $R_{02} \to \infty$ as in Theorem~\ref{thm:indepnoisy}) for secure channel simulation, with and without encoder cribbing.   
\end{example}

In the absence of cribbing between the encoders, from Theorem~\ref{thm:indepnoisy}, the region $\mathcal{R}_{\textup{noisy-coord, $R_{02} \to \infty$}}^{\textup{secrecy}}$ for a perfect legitimate channel $\tilde Y_j=\tilde X_j$ for $j \in \{1,2\}$ and independent sources is simply characterized by the constraints
\begin{align*} 
H(\tilde{X}_1|T) &\geq I(U_{1};X_1|T) \\
H(\tilde{X}_2|T) &\geq I(U_{2};X_2|T) \\
R_{01} &\geq I(U_1;X_1,Y|X_2,T)-H(\tilde{X}_1|\tilde Z,T),
\end{align*}
for some p.m.f. 
\begin{align}
p(x_1,x_2,t,u_{1},u_{2},\tilde{x}_1,\tilde{x}_2,\tilde z,y)&=p(x_1)p(x_2)p(t) \biggl(\prod_{j=1}^2 p(u_{j}|x_j,t) p(\tilde x_j|t)\biggr) \nonumber\\
&\hspace{12pt} \times p(\tilde z|\tilde x_1,\tilde x_2) p(y|u_{1},u_{2},t)  
\end{align}
such that 
\begin{align*}
&\sum\limits_{u_{1},u_{2},\tilde{x}_1,\tilde{x}_2,\tilde z}p(x_1,x_2,u_{1},u_{2},\tilde{x}_1,\tilde{x}_2,\tilde z,y|t)=q(x_1,x_2,y),
\end{align*}
for all $t$. The following proposition explicitly characterizes the optimal region for the given $q_{X_1 X_2 Y}$, with the non-trivial part being the proof of converse.

\begin{prop} \label{prop:1}
For the target distribution $q_{X_1 X_2 Y}$ in Example~\ref{ex:1}, secure channel simulation is feasible if and only if the constraints $H(\tilde X_1) \geq 1$, $H(\tilde X_2) \geq 1$ (thus $H(\tilde X_1,\tilde X_2) \geq 2$), and $R_{01} \geq 1$ hold.
\end{prop}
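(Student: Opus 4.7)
For the direct part, I would instantiate the region of Theorem~\ref{thm:indepnoisy} with $T$ trivial, $U_1 = X_1$, $U_2 = X_2$, and $\tilde X_1, \tilde X_2$ independent uniform binary random variables that are jointly independent of $(X_1, X_2, U_1, U_2)$. With these choices $H(\tilde X_j) = 1 = H(X_j) = I(U_j; X_j)$ for $j \in \{1,2\}$, so the first two constraints are met with equality. Since $Y$ is deterministic given $(X_1, X_2)$ under $q$, we get $I(U_1; X_1, Y | X_2) = H(X_1 | X_2) = 1$, while $\tilde Z = (\tilde X_1, \tilde X_2 \oplus N)$ contains $\tilde X_1$ as a coordinate, giving $H(\tilde X_1 | \tilde Z) = 0$; hence $R_{01} = 1$ suffices.

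\textbf{Converse: preparation.} For the converse, I take an arbitrary feasible tuple $(T, U_1, U_2, \tilde X_1, \tilde X_2, Y)$ satisfying \eqref{pmf:ob3noisy} and the marginal constraint $p(x_1, x_2, y | t) = q(x_1, x_2, y)$. The simple observation that $\tilde X_1$ is a component of $\tilde Z$ gives $H(\tilde X_1 | \tilde Z, T) = 0$ identically, so the third constraint reduces to $R_{01} \geq I(U_1; X_1, Y | X_2, T)$. Next, the target being deterministic in $(X_1, X_2)$ gives $H(Y | X_1, X_2, T) = 0$; combining this with the Markov chain $Y - (U_1, U_2, T) - (X_1, X_2)$ read off from \eqref{pmf:ob3noisy}, I obtain
\[
H(Y | U_1, U_2, T) = H(Y | U_1, U_2, T, X_1, X_2) \leq H(Y | X_1, X_2, T) = 0,
\]
so $Y = g(U_1, U_2, T)$ for some deterministic map $g$.

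\textbf{Converse: key step.} The heart of the argument is to show $X_1$ is a function of $(U_1, T)$ and symmetrically $X_2$ a function of $(U_2, T)$. On the event $\{X_2 = 1\}$ the target forces $X_1 = Y = g(U_1, U_2, T)$, so $X_1$ is determined by $(U_1, U_2, T)$ on this event. On the other hand, the factorization $p(u_2 | x_1, x_2, u_1, t) = p(u_2 | x_2, t)$ from \eqref{pmf:ob3noisy} gives $X_1 \independent U_2 \mid (U_1, X_2, T)$, and restricting to $X_2 = 1$ preserves this independence. Hence conditional on $(U_1 = u_1, T = t, X_2 = 1)$, $X_1$ is both a deterministic function of $U_2$ and independent of $U_2$, so it is constant in $U_2$. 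Combined with the independence $(X_1, U_1, T) \independent X_2$, this constancy lifts to $H(X_1 | U_1, T) = 0$. An entirely analogous argument uses $X_2 = \mathbf{1}\{Y \neq \text{e}\}$ (so $X_2$ is a function of $(U_1, U_2, T)$) together with $X_2 \independent U_1 \mid (U_2, X_1, T)$ to give $H(X_2 | U_2, T) = 0$. Therefore $I(U_j; X_j | T) = H(X_j) = 1$ for $j \in \{1,2\}$.

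\textbf{Finishing and main obstacle.} Plugging these back into the first two constraints of Theorem~\ref{thm:indepnoisy} gives $H(\tilde X_j | T) \geq 1$ and hence $H(\tilde X_j) \geq 1$; the parenthetical $H(\tilde X_1, \tilde X_2) \geq 2$ then follows from $H(\tilde X_1, \tilde X_2) \geq H(\tilde X_1 | T) + H(\tilde X_2 | T)$ via the conditional independence $\tilde X_1 \independent \tilde X_2 | T$ in \eqref{pmf:ob3noisy}. For the rate, $I(U_1; X_1, Y | X_2, T) = I(U_1; X_1 | X_2, T) + I(U_1; Y | X_1, X_2, T) = 1 + 0$, using independence of $X_2$ from $(X_1, U_1, T)$ for the first term and the determinism of $Y$ given $(X_1, X_2)$ for the second, which yields $R_{01} \geq 1$. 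The step I expect to be the main obstacle is the constancy argument in the key step: it is precisely where the deterministic-legitimate-channel hypothesis and the conditional source independence are both used, and it is the piece that has no analogue in the general outer bound of Theorem~\ref{thm:encsideOBnoisy}.
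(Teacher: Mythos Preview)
Your proof is correct, and both the achievability and the converse reach the same conclusions as the paper. The route in the converse key step, however, is genuinely different from the paper's.

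The paper argues by contradiction at a single ``bad'' value of the auxiliary: it supposes $H(X_2|U_2)>0$, picks $u_2$ with $p_{X_2|U_2=u_2}$ supported on $\{0,1\}$, establishes the Markov chain $Y\fooA(X_1,U_2)\fooA X_2$, and then shows that the conditional law $p_{Y|X_1=0,U_2=u_2}$ must assign probability zero to every symbol in $\{0,1,\text{`e'}\}$, a contradiction; it repeats the same scheme with the roles of the users swapped. Your argument is instead structural and global: you first note $Y=g(U_1,U_2,T)$, then observe that on $\{X_2=1\}$ the target forces $X_1=Y$, and combine this with $X_1\independent U_2\mid(U_1,X_2,T)$ and the ``function plus independence implies constant'' principle to conclude $H(X_1|U_1,T)=0$; the symmetric half uses the nice observation $X_2=\mathbf{1}\{Y\neq\text{`e'}\}$, which is a function of $(U_1,U_2,T)$ everywhere and so requires no conditioning event at all. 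Your approach is arguably cleaner and more transparent about where the two hypotheses (deterministic legitimate channel via $Y=g(U_1,U_2,T)$, conditional source independence via $(X_1,U_1)\independent(X_2,U_2)\mid T$) enter; the paper's approach has the virtue of being completely elementary, manipulating only finitely many conditional probabilities without invoking the function-plus-independence lemma.
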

\begin{proof}
For the achievability, it suffices to prove that there exists a p.m.f.
\begin{align*}
p(x_1,x_2,u_{1},u_{2},\tilde{x}_1,\tilde{x}_2,\tilde z,y)&=p(x_1)p(x_2) \biggl(\prod_{j=1}^2 p(u_{j}|x_j) p(\tilde x_j)\biggr) p(\tilde z|\tilde x_1,\tilde x_2) p(y|u_{1},u_{2})   
\end{align*}
such that $\sum\limits_{u_{1},u_{2},\tilde{x}_1,\tilde{x}_2,\tilde z}p(x_1,x_2,u_{1},u_{2},\tilde{x}_1,\tilde{x}_2,\tilde z,y)=q(x_1,x_2,y)$, along with $I(U_{1};X_1)=1$, $I(U_{2};X_2)=1$ and $I(U_1;X_1,Y|X_2)-H(\tilde{X}_1|\tilde Z)=1$. By choosing $U_1=X_1$ and $U_2=X_2$, it is clear that the conditions on the joint p.m.f. $p(x_1,x_2,u_{1},u_{2},\tilde{x}_1,\tilde{x}_2,\tilde z,y)$ are satisfied and $I(U_{1};X_1)=H(X_1)=1$, $I(U_{2};X_2)=H(X_2)=1$, along with $I(U_1;X_1,Y|X_2)-H(\tilde{X}_1|\tilde Z)=H(X_1|X_2)-H(\tilde{X}_1|\tilde{X}_1,\tilde{X}_2 \oplus N)=1$. The interesting part is the proof of converse, which is detailed in Section~\ref{sec:exproof}.
\end{proof}

We next prove that with cribbing, it is possible to 
achieve secure channel simulation with smaller values of $H(\tilde X_1,\tilde X_2)$ and $R_{01}$ compared to the optimal region without cribbing. 
The region $\mathcal{R}_{\textup{noisy-coord}}^{\textup{secrecy, crib}}$ in Theorem~\ref{thm:encsideIBnoisyK0} specializes for independent sources, perfect legitimate channel and unlimited shared randomness rate $R_{02}$ to the set of constraints (with $V_1=\tilde X_1$ and $V_2=\tilde X_2$)
\begin{align*}
H(\tilde{X}_1|\tilde X_2,T) &\geq I(U_{1};X_1|U_2,T) \\
H(\tilde{X}_2|T) &\geq I(U_{2};X_2|T)-I(U_{1};U_2|T)\\
H(\tilde{X}_1,\tilde{X}_2|T) &\geq I(U_{1};X_1|U_2,T)+I(U_{2};X_2|T)\\
R_{01} &\geq I(U_1;X_1,X_2,Y|T)-I(U_{1};U_2|T)-H(\tilde{X}_1|\tilde Z,T),
\end{align*}
for some p.m.f. 
\begin{align}
p(x_1,x_2,t,u_{1},u_{2},\tilde{x}_1,\tilde{x}_2,\tilde z,y)&=p(x_1)p(x_2)p(t) p(u_{2}|x_2,t)p(u_{1}|x_1,u_2,t)p(\tilde x_2|t)\nonumber\\
&\hspace{18pt} \times p(\tilde x_1|\tilde x_2,t)p(\tilde z|\tilde x_1,\tilde x_2) p(y|u_{1},u_{2},t)  
\end{align}
such that 
\begin{align*}
&\sum\limits_{u_{1},u_{2},\tilde{x}_1,\tilde{x}_2,\tilde z}p(x_1,x_2,u_{1},u_{2},\tilde{x}_1,\tilde{x}_2,\tilde z,y|t)=q(x_1,x_2,y),
\end{align*}
for all $t$. Now we can choose $U_2=X_2$ and
\begin{align}
U_1 = \begin{cases}
X_1 &\text{if $U_2=1$}\\
0 &\text{if $U_2=0$},
\end{cases}
\end{align}
to obtain that secure channel simulation is feasible if 
$H(\tilde X_1,\tilde X_2) \geq I(U_{1};X_1|U_2)+I(U_{2};X_2)=I(U_{1};X_1|X_2)+H(X_2)=0.5 \: I(U_1;X_1|X_2=0)+0.5 \: I(U_1;X_1|X_2=1)+1=0+0.5 H(X_1)+1=1.5$, along with $R_{01} \geq I(U_1;X_1,X_2,Y)-I(U_1;U_2)-H(\tilde{X}_1|\tilde Z)=I(U_1;X_1,X_2)+I(U_1;Y|X_1,X_2)-I(U_1;X_2)-H(\tilde{X}_1|\tilde{X}_1,\tilde{X}_2 \oplus N)=I(U_1;X_1|X_2)+0-0=0.5$. This strictly improves upon the (optimal) requirements without cribbing, which were $H(\tilde X_1,\tilde X_2) \geq 2$, along with $R_{01} \geq 1$. We thus conclude that secure channel simulation can be achieved with lesser values of both $H(\tilde X_1,\tilde X_2)$ as well as shared randomness rate $R_{01}$ in the presence of cribbing.

\section{Proof of Theorem~\ref{thm:encsideIBnoisy}} \label{app:pfThm1noisy}
The proof makes use of the Output Statistics of Random Binning (OSRB) framework of \cite{yassaee2014achievability}. In the following discussion, we adopt the convention of using capital letters (such as $P_{X}$) to represent random p.m.f.'s, as in \cite{cuff2013distributed,yassaee2014achievability}. On the other hand, lowercase letters (like $p_X$) denote non-random p.m.f.'s. We denote the uniform distribution over the set $\mathcal{A}$ as $p_{\mathcal{A}}^{\text{U}}$. The symbol $\approx$ is used for approximations of p.m.f.'s, following the convention in \cite{yassaee2014achievability}. For two random p.m.f.'s $P_X$ and $Q_X$ defined on the same alphabet $\mathcal{X}$, the notation $P_X \stackrel{\epsilon}\approx Q_X$ is used to indicate $\eE{[||P_X-Q_X||_1]} \leq \epsilon$. We define the notation $P_{X^n} \approx Q_{X^n}$ for any two sequences of random p.m.f.'s $P_{X^n}$ and $Q_{X^n}$ on $\mathcal{X}^n$ to indicate $\lim_{n \to \infty} \mathbb{E}[||P_{X^n}-Q_{X^n}||_1] = 0$. 

We establish achievability when $|\mathcal{T}|=1$, while the general case follows by time-sharing. We next define two protocols, one each based on random binning and random coding.\\
\underline{Random Binning Protocol:} Let the random variables $(U_{1}^n,V_{1}^n,U_{2}^n,V_{2}^n,X_1^n,X_2^n,W^n,\tilde{X}_1^n,\tilde{X}_2^n,\tilde{Y}^n,\tilde Z^n,Y^n)$ be drawn i.i.d. according to the joint distribution
\begin{align}
p(x_1,x_2,w,u_{1},v_{1},u_{2},v_{2},\tilde{x}_1,\tilde{x}_2,\tilde{y},\tilde z,y) &=p(x_1,x_2,w)p(u_{1},v_1|x_1)p(u_{2},v_2|x_2)p(y|u_{1},u_{2},w) \notag\\
&\phantom{ww}\times p(\tilde{x}_1|v_{1})p(\tilde{x}_2|v_{2})p(\tilde{y},\tilde z|\tilde{x}_1,\tilde{x}_2)
\end{align}
such that the marginal $p(x_1,x_2,w,y)=q(x_1,x_2,w,y)$ and the independence between the variables $(U_{1},U_{2},X_1,X_2,W,Y)$ and $ (V_{1},V_{2},\tilde{X}_1,\tilde{X}_2,\tilde{Y},\tilde Z)$ holds. 
The following random binning is then applied:
\begin{itemize}
\item Independently generate two uniform bin indices $(K_1,F_1)$ of $(U_{1}^n,V_{1}^n)$, where $K_1 = \phi_{1}(U_{1}^n,V_{1}^n) \in [1:2^{n R_{01}}]$ and $F_1 = \phi_{2}(U_{1}^n,V_{1}^n) \in [1:2^{n \tilde{R}_1}]$. 
\item Similarly, independently generate two uniform bin indices $(K_2,F_2)$ from $(U_{2}^n,V_{2}^n)$, where $K_2 = \phi_{3}(U_{2}^n,V_{2}^n) \in [1:2^{n R_{02}}]$ and $F_2 = \phi_{4}(U_{2}^n,V_{2}^n) \in [1:2^{n \tilde{R}_2}]$. Note that $F_1,F_2$ represents additional shared randomness assumed in the OSRB framework, which will be eliminated later without affecting the i.i.d. distribution. 
\end{itemize}
The receiver estimates $(\hat{u}_{1}^n,\hat{u}_{2}^n)$ from its observations $(k_1,f_1,k_2,f_2,w^n,\tilde{y}^n)$ using a Slepian-Wolf decoder.
The random p.m.f. induced by this binning scheme is:
\begin{align}
&P(x_1^n,x_2^n,w^n,y^n,u_{1}^n,u_{2}^n,v_{1}^n,v_{2}^n,\tilde{x}_1^n,\tilde{x}_2^n\tilde{y}^n,\tilde z^n,k_1,f_1,k_2,f_2,\hat{u}_{1}^n,\hat{u}_{2}^n) \notag\\
&= p(x_1^n,x_2^n,w^n)p(u_{1}^n,v_1^n|x_1^n)p(u_{2}^n,v_2^n|x_2^n)p(y^n|u_{1}^n,u_{2}^n,w^n) p(\tilde{x}_1^n|v_{1}^n)p(\tilde{x}_2^n|v_{2}^n)\notag\\
&\phantom{w} \times p(\tilde{y}^n,\tilde z^n|\tilde{x}_1^n,\tilde{x}_2^n)P(k_1,f_1|u_{1}^n,v_{1}^n) P(k_2,f_2|u_{2}^n,v_{2}^n)P^{SW}(\hat{u}_{1}^n,\hat{u}_{2}^n|k_1,f_1,k_2,f_2,w^n,\tilde{y}^n) \label{eq:osrb2noisy} \\
&= p(x_1^n,x_2^n,w^n) P(k_1,f_1,u_{1}^n,v_{1}^n|x_1^n) P(k_2,f_2,u_{2}^n,v_{2}^n|x_2^n)p(\tilde{x}_1^n|v_{1}^n)p(\tilde{x}_2^n|v_{2}^n) \notag\\
&\phantom{w} \times p(\tilde{y}^n,\tilde z^n|\tilde{x}_1^n,\tilde{x}_2^n) P^{SW}(\hat{u}_{1}^n,\hat{u}_{2}^n|k_1,f_1,k_2,f_2,w^n,\tilde{y}^n) p(y^n|u_{1}^n,u_{2}^n,w^n) \notag\\
&= p(x_1^n,x_2^n,w^n) P(k_1,f_1|x_1^n) P(u_{1}^n,v_{1}^n|k_1,f_1,x_1^n)P(k_2,f_2|x_2^n) P(u_{2}^n,v_{2}^n|k_2,f_2,x_2^n) \notag\\
&\phantom{w} \times p(\tilde{x}_1^n|v_{1}^n)p(\tilde{x}_2^n|v_{2}^n)p(\tilde{y}^n,\tilde z^n|\tilde{x}_1^n,\tilde{x}_2^n) P^{SW}(\hat{u}_{1}^n,\hat{u}_{2}^n|k_1,f_1,k_2,f_2,w^n,\tilde{y}^n) p(y^n|u_{1}^n,u_{2}^n,w^n). \notag
\end{align}

\noindent \underline{Random Coding Protocol:} In this scheme, we assume the presence of additional shared randomness $F_j$ of rate $\tilde{R}_j, j \in \{1,2\}$ between the respective encoders and the decoder in the original problem. 
Encoder $j \in \{1,2\}$ observes $(k_j,f_j,x_j^n)$, and generates $(u_{j}^n,v_{j}^n)$ according to the p.m.f. $P(u_{j}^n,v_{j}^n|k_j,f_j,x_j^n)$ from the binning protocol above. Further, encoder $j \in \{1,2\}$ then creates the channel input $\tilde{x}_j^n$ according to the p.m.f. $p(\tilde{x}_j^n|v_{j}^n)$ for transmission over the channel, while the channel outputs $(\tilde{y}^n,\tilde z^n)$.
The legitimate decoder first estimates $(\hat{u}_{1}^n,\hat{u}_{2}^n)$ from its observations $(k_1,k_2,f_1,f_2,w^n,\tilde{y}^n)$ using the Slepian-Wolf decoder from the binning protocol. 
Then it generates the output $y^n$ according to the distribution $p_{Y^n|U_{1}^n,U_{2}^n,W^n}(y^n|\hat{u}_{1}^n,\hat{u}_{2}^n,w^n)$. 
The induced random p.m.f. from the random coding scheme is
\begin{align}
&\hat{P}(x_1^n,x_2^n,w^n,y^n,u_{1}^n,u_{2}^n,v_{1}^n,v_{2}^n,\tilde{x}_1^n,\tilde{x}_2^n,\tilde{y}^n,\tilde z^n,k_1,f_1,k_2,f_2,\hat{u}_{1}^n,\hat{u}_{2}^n) \notag\\
&= p^{\text{U}}(k_1)p^{\text{U}}(f_1)p^{\text{U}}(k_2)p^{\text{U}}(f_2)p(x_1^n,x_2^n,w^n)  P(u_{1}^n,v_{1}^n|k_1,f_1,x_1^n) P(u_{2}^n,v_{2}^n|k_2,f_2,x_2^n) \notag\\
&\phantom{w} \times p(\tilde{x}_1^n|v_{1}^n)p(\tilde{x}_2^n|v_{2}^n)p(\tilde{y}^n,\tilde z^n|\tilde{x}_1^n,\tilde{x}_2^n) P^{SW}(\hat{u}_{1}^n,\hat{u}_{2}^n|k_1,f_1,k_2,f_2,w^n,\tilde{y}^n) p(y^n|\hat{u}_{1}^n,\hat{u}_{2}^n,w^n). \label{eq:rcpmfnoisy}
\end{align}

We next establish constraints that lead to nearly identical induced p.m.f.'s from both protocols. 

\noindent \underline{Analysis of Rate Constraints:}\\
Using the fact that $(k_j,f_j)$ are bin indices of $(u_{j}^n,v_{j}^n)$ for $j \in \{1,2\}$, we impose the conditions
\begin{align}
R_{01}+\tilde{R}_1 &\leq H(U_{1},V_{1}|X_1,X_2,W) \notag\\
&= H(U_{1}|X_1)+H(V_{1}), \label{eq:cond11noisy} \\
R_{02}+\tilde{R}_2 &\leq H(U_{2},V_{2}|X_1,X_2,W) \notag\\
&= H(U_{2}|X_2)+H(V_{2}), \label{eq:cond12noisy} \\
R_{01}+\tilde{R}_1+R_{02}+\tilde{R}_2 &\leq H(U_{1},V_{1},U_{2},V_{2}|X_1,X_2,W)\nonumber\\
&=H(U_{1},U_{2}|X_1,X_2)\!+\!H(V_{1},V_{2}) \label{eq:cond13noisy}
\end{align}
{(where the equalities in \eqref{eq:cond11noisy}--\eqref{eq:cond13noisy} follow from the independence between $(U_{1},U_{2},X_1,X_2,W,Y)$ and $(V_{1},V_{2},\tilde{X}_1,\tilde{X}_2,\tilde{Y},\tilde Z)$ and the Markov chains $U_{1} \fooA X_1 \fooA (U_{2},X_2,W)$, $U_{2} \fooA X_2 \fooA (U_{1},X_1,W)$, and $(U_{1},U_{2}) \fooA (X_1,X_2) \fooA W$)}, that ensure, by invoking \cite[Theorem 1]{yassaee2014achievability}
\begin{align}
P(x_1^n,x_2^n,w^n,k_1,k_2,f_1,f_2)  &\approx p^{\text{U}}(k_1)p^{\text{U}}(f_1)p^{\text{U}}(k_2)p^{\text{U}}(f_2)p(x_1^n,x_2^n,w^n) \notag\\
&= \hat{P}(x_1^n,x_2^n,w^n,k_1,k_2,f_1,f_2).
\end{align}
This also implies
\begin{align}
&P(x_1^n,x_2^n,w^n,u_{1}^n,u_{2}^n,v_{1}^n,v_{2}^n,\tilde{x}_1^n,\tilde{x}_2^n,\tilde{y}^n,\tilde z^n,k_1,f_1,k_2,f_2,\hat{u}_{1}^n,\hat{u}_{2}^n) \notag\\
&\phantom{w} \approx \hat{P}(x_1^n,x_2^n,w^n,u_{1}^n,u_{2}^n,v_{1}^n,v_{2}^n,\tilde{x}_1^n,\tilde{x}_2^n,\tilde{y}^n,\tilde z^n,k_1,f_1,k_2,f_2,\hat{u}_{1}^n,\hat{u}_{2}^n). \label{eq:condit1noisy}
\end{align}

We next impose the following constraints for the success of the (Slepian-Wolf) decoder by Slepian-Wolf theorem~\cite{slepian1973noiseless}
\begin{align}
R_{01}+\tilde{R}_1 &\geq H(U_{1},V_{1}|U_{2},V_{2},W,\tilde{Y}) \notag\\
&= H(U_{1}|U_{2},W)+H(V_{1}|V_{2},\tilde{Y}), \label{eq:cond21noisy} \\
R_{02}+\tilde{R}_2 &\geq H(U_{2},V_{2}|U_{1},V_{1},W,\tilde{Y}) \notag\\
&= H(U_{2}|U_{1},W)+H(V_{2}|V_{1},\tilde{Y}), \label{eq:cond22noisy} \\
R_{01}+\tilde{R}_1+R_{02}+\tilde{R}_2 &\geq H(U_{1},V_{1},U_{2},V_{2}|W,\tilde{Y}) \notag\\
&= H(U_{1},U_{2}|W)+H(V_{1},V_{2}|\tilde{Y}). \label{eq:cond23noisy}
\end{align}
Expressions \eqref{eq:cond21noisy}--\eqref{eq:cond23noisy} suffice to obtain
\begin{align}
&P(x_1^n,x_2^n,w^n,u_{1}^n,u_{2}^n,v_{1}^n,v_{2}^n,\tilde{x}_1^n,\tilde{x}_2^n,\tilde{y}^n,\tilde z^n,k_1,f_1,k_2,f_2,\hat{u}_{1}^n,\hat{u}_{2}^n) \notag\\
&\phantom{w} \approx P(x_1^n,x_2^n,w^n,u_{1}^n,u_{2}^n,v_{1}^n,v_{2}^n,\tilde{x}_1^n,\tilde{x}_2^n,\tilde{y}^n,\tilde z^n,k_1,f_1,k_2,f_2) \mathbbm{1}\{\hat{u}_{1}^n=u_{1}^n,\hat{u}_{2}^n=u_{2}^n\}. \label{eq:condit2noisy}
\end{align}
Using \eqref{eq:condit2noisy} and \eqref{eq:condit1noisy} in conjunction with the first and third parts of \cite[Lemma 4]{yassaee2014achievability}, we obtain
\begin{align}
&\hat{P}(x_1^n,x_2^n,w^n,u_{1}^n,u_{2}^n,v_{1}^n,v_{2}^n,\tilde{x}_1^n,\tilde{x}_2^n,\tilde{y}^n,\tilde z^n,k_1,f_1,k_2,f_2,\hat{u}_{1}^n,\hat{u}_{2}^n,y^n) \notag\\
&= \hat{P}(x_1^n,x_2^n,w^n,u_{1}^n,u_{2}^n,v_{1}^n,v_{2}^n,\tilde{x}_1^n,\tilde{x}_2^n,\tilde{y}^n,\tilde z^n,k_1,f_1,k_2,f_2,\hat{u}_{1}^n,\hat{u}_{2}^n) p(y^n|\hat{u}_{1}^n,\hat{u}_{2}^n,w^n) \notag\\
&\approx P(x_1^n,x_2^n,w^n,u_{1}^n,u_{2}^n,v_{1}^n,v_{2}^n,\tilde{x}_1^n,\tilde{x}_2^n,\tilde{y}^n,\tilde z^n,k_1,f_1,k_2,f_2)p(y^n|\hat{u}_{1}^n,\hat{u}_{2}^n,w^n) \notag\\
&\phantom{w} \times \mathbbm{1}\{\hat{u}_{1}^n=u_{1}^n,\hat{u}_{2}^n=u_{2}^n\} \notag\\
&= P(x_1^n,x_2^n,w^n,u_{1}^n,u_{2}^n,v_{1}^n,v_{2}^n,\tilde{x}_1^n,\tilde{x}_2^n,\tilde{y}^n,\tilde z^n,k_1,f_1,k_2,f_2)p(y^n|u_{1}^n,u_{2}^n,w^n) \notag\\
&\phantom{w} \times \mathbbm{1}\{\hat{u}_{1}^n=u_{1}^n,\hat{u}_{2}^n=u_{2}^n\}  \notag\\
&= P(x_1^n,x_2^n,w^n,u_{1}^n,u_{2}^n,v_{1}^n,v_{2}^n,\tilde{x}_1^n,\tilde{x}_2^n,\tilde{y}^n,\tilde z^n,k_1,f_1,k_2,f_2,y^n) \mathbbm{1}\{\hat{u}_{1}^n=u_{1}^n,\hat{u}_{2}^n=u_{2}^n\}.
\end{align}
This implies, by the first part of \cite[Lemma 4]{yassaee2014achievability}
\begin{align}
&\hat{P}(x_1^n,x_2^n,w^n,y^n,\tilde z^n,f_1,f_2) \approx P(x_1^n,x_2^n,w^n,y^n,\tilde z^n,f_1,f_2). \label{eq:osrbcond1noisy}
\end{align}

To show the strong secrecy condition, we must also ensure that $(X_1^n,X_2^n,W^n,Y^n,\tilde Z^n)$ is nearly independent of the additional shared randomness indices $(F_1,F_2)$ assumed in the protocol, so that the latter can be eliminated without affecting the desired i.i.d. distribution. This ensures that for almost all choices $(F_1=f_1,F_2=f_2)$, the term $I(X_1^n,X_2^n,W^n,Y^n;\tilde Z^n|F_1=f_1,F_2=f_2)$ is asymptotically zero. The nodes can then agree upon an instance $(F_1=f_1^{*},F_2=f_2^{*})$, using randomness extraction results such as the following from~\cite[Section III.A]{pierrot2013joint},~\cite[Lemma 2.20]{cervia2018coordination}.
\begin{lemma} \cite{pierrot2013joint,cervia2018coordination} \label{lem:exp}
Let $A^n$ be a discrete memoryless source drawn according to $P_{A^n}$, and $P_{B^n|A^n}$ be a discrete memoryless channel. Consider a uniform random binning of $B^n$ given by $K=\phi_n(B^n)$, where $\phi_n: \mathcal{B}^n \to [1:2^{nR}]$. Then if $R \leq H(B|A)$, there exists a constant $\alpha > 0$ such that
\begin{align}
\mathbb{E}_{\phi_n}[D(P_{A^n, K} \parallel P_{A^n} \cdot p^{U}(K))] \leq 2^{-\alpha n},
\end{align}
where $D(\cdot\parallel\cdot)$ stands for the Kullback-Leibler (KL) divergence between the two distributions. Notice that a vanishing KL divergence also ensures a vanishing total variation distance using Pinsker's inequality. 
\end{lemma}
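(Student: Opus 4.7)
\textbf{Proof proposal for Lemma~\ref{lem:exp}.} My plan is to bound the expected KL divergence through a second-moment/collision-probability argument followed by a typicality split, and to read off the exponential rate $\alpha$ from the gap $H(B|A)-R$.

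First I would condition on $A^n$ and write
\begin{align*}
\mathbb{E}_{\phi_n}[D(P_{A^n,K}\,\|\,P_{A^n}\cdot p^{U}(K))] &= \sum_{a^n} P(a^n)\,\mathbb{E}_{\phi_n}\!\left[D\!\left(P_{K|A^n=a^n}\,\|\,p^{U}(K)\right)\right],
\end{align*}
so it suffices to control the inner expectation for each fixed $a^n$. Set $M=2^{nR}$. Because $\log$ is concave, Jensen's inequality gives
\begin{align*}
\mathbb{E}_{\phi_n}\!\left[D(P_{K|A^n=a^n}\,\|\,p^{U})\right] \leq \log\Big(M\cdot\mathbb{E}_{\phi_n}\!\Big[\sum_k P(K=k|a^n)^2\Big]\Big).
\end{align*}

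Next I would compute the expected collision probability explicitly. Since $\phi_n$ assigns to each $b^n$ an independent uniform bin in $[1:M]$,
\begin{align*}
\mathbb{E}_{\phi_n}\!\Big[\sum_k P(K=k|a^n)^2\Big] &= \sum_{b^n,b'^n} P(b^n|a^n)P(b'^n|a^n)\sum_k \Pr[\phi_n(b^n)=k,\phi_n(b'^n)=k] \\
&= \frac{1}{M}\sum_{b^n} P(b^n|a^n)^2 + \frac{1}{M}\sum_{b^n\neq b'^n}P(b^n|a^n)P(b'^n|a^n)\cdot\frac{1}{M}\cdot M,
\end{align*}
which collapses to at most $\sum_{b^n} P(b^n|a^n)^2 + 1/M$. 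Using $\log(1+x)\leq x/\ln 2$, this yields
\begin{align*}
\mathbb{E}_{\phi_n}[D(P_{A^n,K}\,\|\,P_{A^n}p^{U}(K))] &\leq \frac{M}{\ln 2}\sum_{a^n,b^n} P(a^n)P(b^n|a^n)^2 = \frac{M}{\ln 2}\,\mathbb{E}[P(B^n|A^n)].
\end{align*}

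Finally I would split the outer expectation using the jointly $\delta$-typical set $T^{(n)}_\delta$ for $P_{AB}$. On $T^{(n)}_\delta$ one has $P(B^n|A^n)\leq 2^{-n(H(B|A)-\delta)}$, so the typical contribution is at most $M\cdot 2^{-n(H(B|A)-\delta)}=2^{-n(H(B|A)-R-\delta)}$. Off $T^{(n)}_\delta$ the contribution is at most $M\cdot\Pr[(A^n,B^n)\notin T^{(n)}_\delta]\leq 2^{nR}\cdot 2^{-c\,n\delta^2}$ by the standard Chernoff-type concentration for joint types. Picking $\delta$ small enough that both $H(B|A)-R-\delta>0$ and $c\delta^2-R>0$ (which needs the strict inequality $R<H(B|A)$ that is the version of the hypothesis yielding a strictly positive exponent), one gets the stated bound with $\alpha$ any positive number smaller than $\min(H(B|A)-R,\,c\delta^2-R)/\ln 2$.

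The main obstacle is the delicate balance in the last step: the typical bound wants $\delta$ small, while the atypical bound wants $\delta^2$ large enough so that $2^{nR}$ is beaten. This two-sided dependence on $\delta$ is what forces a strict rate gap $H(B|A)-R>0$ in order to extract an $\alpha>0$, and is the only place where the i.i.d.\ (memoryless) structure is essential.
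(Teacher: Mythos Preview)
The paper does not prove Lemma~\ref{lem:exp}; it is quoted from \cite{pierrot2013joint,cervia2018coordination} and invoked as a black box in the achievability proof. There is therefore no paper proof to compare against, and your argument must stand on its own.

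Your collision-probability route is the right idea, but the order of operations creates a genuine gap. After you apply $\log(1+x)\le x/\ln 2$ you are left with $\tfrac{M}{\ln 2}\,\mathbb{E}[P(B^n\mid A^n)]$ and only then split on typicality; the atypical piece becomes $M\cdot\Pr[(A^n,B^n)\notin T_\delta^{(n)}]\le 2^{nR}\cdot 2^{-cn\delta^2}$. For this to decay you need $c\delta^2>R$, while the typical piece needs $\delta<H(B\mid A)-R$. When $R$ is close to $H(B\mid A)$ these two constraints are incompatible (the first forces $\delta>\sqrt{R/c}$, the second forces $\delta$ small), so no admissible $\delta$ exists. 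You noticed the tension but misdiagnosed it as merely requiring a strict gap $R<H(B\mid A)$; in fact the argument as written fails for all $R$ sufficiently close to $H(B\mid A)$.

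The standard repair is to postpone the linearization of the logarithm. Write $\mathbb{E}_{\phi}[D(P_{A^n,K}\Vert P_{A^n}p^{U})]=\sum_{a^n,b^n}P(a^n,b^n)\,\mathbb{E}_{\phi}\bigl[\log\bigl(M\,P(\phi(b^n)\mid a^n)\bigr)\bigr]$ and apply Jensen over $\phi$ \emph{inside} the sum, using $\mathbb{E}_{\phi}[P(\phi(b^n)\mid a^n)]\le P(b^n\mid a^n)+1/M$, to obtain $\sum_{a^n,b^n}P(a^n,b^n)\log\bigl(1+M\,P(b^n\mid a^n)\bigr)$. Now split on $(a^n,b^n)$-typicality: on the typical set use $\log(1+x)\le x/\ln 2$ to get $2^{-n(H(B\mid A)-R-\delta)}/\ln 2$; on the atypical set use the trivial bound $\log(1+M\,P(b^n\mid a^n))\le nR+1$, giving $(nR+1)\Pr[\text{atyp}]$. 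The atypical term is now only \emph{polynomial} in $n$ times $2^{-cn\delta^2}$, hence exponentially small for any fixed $\delta>0$, and the two $\delta$-requirements decouple. Choosing $\delta=(H(B\mid A)-R)/2$ then yields the claimed exponent $\alpha>0$.
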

Using Lemma~\ref{lem:exp}, we impose the following conditions:
\begin{align}
\tilde{R}_1 &\leq H(U_{1},V_{1}|X_1,X_2,W,Y,\tilde Z) \notag\\
&= H(U_{1}|X_1,X_2,W,Y)+H(V_{1}|\tilde Z), \label{eq:cond31secnoisy} \\
\tilde{R}_2 &\leq H(U_{2},V_{2}|X_1,X_2,W,Y,\tilde Z) \notag\\
&= H(U_{2}|X_1,X_2,W,Y)+H(V_{2}|\tilde Z), \label{eq:cond32secnoisy} \\
\tilde{R}_1+\tilde{R}_2 &\leq H(U_{1},V_{1},U_{2},V_{2}|X_1,X_2,W,Y,\tilde Z)  \notag\\
&= H(U_{1},U_{2}|X_1,X_2,W,Y)+H(V_{1},V_{2}|\tilde Z). \label{eq:cond33secnoisy}
\end{align}
This suffices to obtain
\begin{align}
&P(x_1^n,x_2^n,w^n,y^n,\tilde z^n,f_1,f_2) \approx p^{\text{U}}(f_1) p^{\text{U}}(f_2) p(x_1^n,x_2^n,w^n,y^n,\tilde z^n), 
\end{align}
which implies that
\begin{align}
&\hat{P}(x_1^n,x_2^n,w^n,y^n,\tilde z^n,f_1,f_2) \approx p^{\text{U}}(f_1) p^{\text{U}}(f_2) p(x_1^n,x_2^n,w^n,y^n,\tilde z^n),
\end{align}
by invoking \eqref{eq:osrbcond1noisy} and the triangle inequality.
Thus, equations \eqref{eq:cond11noisy} -- \eqref{eq:cond13noisy}, \eqref{eq:cond21noisy} -- \eqref{eq:cond23noisy} and \eqref{eq:cond31secnoisy} -- \eqref{eq:cond33secnoisy} guarantee that
\begin{align}
P(x_1^n,x_2^n,w^n,y^n,\tilde z^n,f_1,f_2) &\approx p^{\text{U}}(f_1) p^{\text{U}}(f_2) p(x_1^n,x_2^n,w^n,y^n,\tilde z^n), \label{osrbref1noisy} \\
P(x_1^n,x_2^n,w^n,y^n,\tilde z^n,f_1,f_2)  & \approx \hat{P}(x_1^n,x_2^n,w^n,y^n,\tilde z^n,f_1,f_2) \label{osrbref2noisy}.
\end{align}
Conditions \eqref{osrbref1noisy} and \eqref{osrbref2noisy} imply
\begin{align}
&\lim_{n \to \infty} \mathbb{E} \biggl[\lVert P(x_1^n,x_2^n,w^n,y^n,\tilde z^n,f_1,f_2)-p^{\text{U}}(f_1) p^{\text{U}}(f_2) p(x_1^n,x_2^n,w^n,y^n,\tilde z^n)\rVert_1  \notag\\ 
&\phantom{wwww} + \lVert P(x_1^n,x_2^n,w^n,y^n,\tilde z^n,f_1,f_2)-\hat{P}(x_1^n,x_2^n,w^n,y^n,\tilde z^n,f_1,f_2) \rVert_1\biggr] = 0,
\end{align}
where the expectation is over the random binning. This in turn implies the existence of a realization of the random binning with corresponding p.m.f. $p$ so that we can replace $P$ with $p$ and denote the resulting p.m.f.'s for the random binning and random coding protocols by $p$ and $\hat{p}$, then
\begin{align}
p(x_1^n,x_2^n,w^n,y^n,\tilde z^n,f_1,f_2) &\stackrel{\epsilon_n}\approx p^{\text{U}}(f_1) p^{\text{U}}(f_2) p(x_1^n,x_2^n,w^n,y^n,\tilde z^n), \label{osrbref3noisy} \\
p(x_1^n,x_2^n,w^n,y^n,\tilde z^n,f_1,f_2)  & \stackrel{\delta_n}\approx \hat{p}(x_1^n,x_2^n,w^n,y^n,\tilde z^n,f_1,f_2) \label{osrbref4noisy},
\end{align}
where $\epsilon_n \to 0$ and $\delta_n \to 0$ as $n \to \infty$.
Condition \eqref{osrbref3noisy} implies that $p(f_1,f_2) \stackrel{\epsilon_n}\approx p^{\text{U}}(f_1) p^{\text{U}}(f_2)$. Also, we have $\hat{p}(f_1,f_2) = p^{\text{U}}(f_1) p^{\text{U}}(f_2)$. Thus, \eqref{osrbref3noisy} and \eqref{osrbref4noisy} imply
\begin{align}
p(f_1,f_2) p(x_1^n,x_2^n,w^n,y^n,\tilde z^n|f_1,f_2) &\stackrel{2\epsilon_n}\approx p(f_1,f_2) p(x_1^n,x_2^n,w^n,y^n,\tilde z^n), \\
p(f_1,f_2)p(x_1^n,x_2^n,w^n,y^n,\tilde z^n|f_1,f_2) &\stackrel{\epsilon_n+\delta_n}\approx p(f_1,f_2)\hat{p}(x_1^n,x_2^n,w^n,y^n,\tilde z^n|f_1,f_2).
\end{align}
In terms of total variation distances, this in turn gives
\begin{align}
&\sum_{f_1,f_2} p(f_1,f_2)||p(x_1^n,x_2^n,w^n,y^n,\tilde z^n|f_1,f_2)-p(x_1^n,x_2^n,w^n,y^n,\tilde z^n)||_1 \leq 2\epsilon_n, \label{osrbref5noisy}\\
&\sum_{f_1,f_2} p(f_1,f_2)||p(x_1^n,x_2^n,w^n,y^n,\tilde z^n|f_1,f_2)-\hat{p}(x_1^n,x_2^n,w^n,y^n,\tilde z^n|f_1,f_2)||_1 \leq \epsilon_n+\delta_n. \label{osrbref6noisy}
\end{align}
Adding \eqref{osrbref5noisy} and \eqref{osrbref6noisy}, it follows that there exist instances $f_1^{*},f_2^{*}$ such that $p(f_1^{*}), p(f_2^{*}) > 0$ and
\begin{align}
p(x_1^n,x_2^n,w^n,y^n,\tilde z^n|f_1^{*},f_2^{*}) & \stackrel{3\epsilon_n+\delta_n}\approx p(x_1^n,x_2^n,w^n,y^n,\tilde z^n), \label{osrbref7noisy} \\
p(x_1^n,x_2^n,w^n,y^n,\tilde z^n|f_1^{*},f_2^{*}) &\stackrel{3\epsilon_n\!+\!\delta_n}\approx \hat{p}(x_1^n,x_2^n,w^n,y^n,\tilde z^n|f_1^{*},f_2^{*}) \label{osrbref8noisy}.
\end{align}
From \eqref{osrbref8noisy} and \eqref{osrbref7noisy}, we obtain
\begin{align}
&\hat{p}(x_1^n,x_2^n,w^n,y^n,\tilde z^n|f_1^{*},f_2^{*}) \approx p(x_1^n,x_2^n,w^n,y^n,\tilde z^n).
\end{align}
This proves the strong coordination of the joint p.m.f. of $(X_1^n,X_2^n,W^n,Y^n)$. For the secrecy constraint, we note that since conditions \eqref{eq:cond31secnoisy}--\eqref{eq:cond33secnoisy} ensure the asymptotic mutual independence of $(X_1^n,X_2^n,W^n,Y^n)$, $\tilde Z^n$ and $(F_1,F_2)$, we have the following sequence of steps as $n \to \infty$
\begin{align}
&I_p(F_1,F_2,\tilde Z^n;X_1^n,X_2^n,W^n,Y^n) \to 0 \notag\\
&\implies I_p(\tilde Z^n;X_1^n,X_2^n,W^n,Y^n|F_1,F_2) \to 0 \notag\\
&\implies I_p(\tilde Z^n;X_1^n,X_2^n,W^n,Y^n|F_1=f_1,F_2=f_2) \to 0 \:\: \forall \:\: f_1,f_2 \:\: \textrm{s.t.}  \:\: p(f_1),p(f_2)>0 \notag\\
&\implies I(\tilde Z^n;X_1^n,X_2^n,W^n,Y^n)|p(x_1^n,x_2^n,w^n,y^n,\tilde z^n|f_1,f_2)  \to 0 \:\: \forall \:\: f_1,f_2 \:\: \textrm{s.t.}  \:\: p(f_1),p(f_2)>0 \notag\\
&\implies I(\tilde Z^n;X_1^n,X_2^n,W^n,Y^n)|p(x_1^n,x_2^n,w^n,y^n,\tilde z^n|f_1^{*},f_2^{*})  \to 0, \label{eq:sec2noisy}
\end{align}
where $f_1^{*},f_2^{*}$ is fixed in \eqref{osrbref7noisy}. 
Thus, we have
\begin{align}
&I(\tilde Z^n;X_1^n,X_2^n,W^n,Y^n)|p(x_1^n,x_2^n,w^n,y^n,\tilde z^n|f_1^{*},f_2^{*}) \to 0 \:\: \textrm{as} \:\: n \to \infty, \label{refeq1noisy}\\
&p(x_1^n,x_2^n,w^n,y^n,\tilde z^n|f_1^{*},f_2^{*}) \approx \hat{p}(x_1^n,x_2^n,w^n,y^n,\tilde z^n|f_1^{*},f_2^{*}). \label{refeq3noisy}
\end{align}
Since mutual information is a continuous function of the probability distribution, expressions \eqref{refeq1noisy}--\eqref{refeq3noisy} imply 
\begin{align}
&I(\tilde Z^n;X_1^n,X_2^n,W^n,Y^n)|\hat{p}(x_1^n,x_2^n,w^n,y^n,\tilde z^n|f_1^{*},f_2^{*})   \to 0
\end{align}
as $n \to \infty$, by utilizing the property that if random variables $\Theta$ and $\Theta'$ with the same support set $\varTheta$ satisfy $\lVert p_{\Theta}-p_{\Theta'}\rVert_1 \leq \epsilon \leq 1/4$, then according to \cite[Theorem 17.3.3]{cover2012elements}, it follows that $|H(\Theta)-H(\Theta')| \leq \zeta\log|\varTheta|$, where $\zeta$ approaches zero as $\epsilon$ approaches zero.
Finally on eliminating $(\tilde{R}_1,\tilde{R}_2)$ from equations \eqref{eq:cond11noisy} -- \eqref{eq:cond13noisy}, \eqref{eq:cond21noisy} -- \eqref{eq:cond23noisy} and \eqref{eq:cond31secnoisy} -- \eqref{eq:cond33secnoisy} by the Fourier-Motzkin elimination (FME) procedure, we obtain the rate constraints in Theorem~\ref{thm:encsideIBnoisy}.

\section{Proof of Theorem \ref{thm:encsideOBnoisy}}\label{proof:OB-thm6}
Consider a coding scheme that induces a joint distribution on $(X_1^n,X_2^n,W^n,Y^n,\tilde Z^n)$ which satisfies the constraint
\begin{gather}
\lVert p_{X_1^n,X_2^n,W^n,Y^n,\tilde Z^n}-p_{\tilde Z^n} \cdot q^{(n)}_{X_1X_2WY}\rVert_1 \leq \epsilon, \label{eqn:totalSRnoisy}
\end{gather}
for $\epsilon\in(0,\frac{1}{4}]$. 
To simplify notation, we define $\Theta_{\sim i} \triangleq (\Theta^{i-1}, \Theta_{i+1}^n)$ for any vector $\Theta^n$. The following lemmas will be useful in establishing the outer bound.
\begin{lemma} \cite[Lemma 6]{CerviaLLB20} \label{lem:c1}
Let $p_{S^n}$ be such that $||p_{S^n}-q_S^{(n)}||_1 \leq \epsilon$, where $q^{(n)}_S(s^n)=\prod_{i=1}^n q_S(s_i)$, then
\begin{align}
\sum_{i=1}^n I_p(S_i;S_{\sim i}) \leq ng_1(\epsilon),
\end{align}
where ${ g_1(\epsilon)=2\sqrt{\epsilon}\left(H(S)+\log{|\mathcal{S}|+\log{\frac{1}{\sqrt{\epsilon}}}}\right)}\to 0$ as $\epsilon \to 0$.
\end{lemma}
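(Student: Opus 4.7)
The plan is to leverage the closeness of $p_{S^n}$ to the iid target $q_S^{(n)}$---under which the mutual informations $I_q(S_i;S_{\sim i})$ all vanish by independence---and to quantify, via continuity-type bounds, how far $I_p(S_i;S_{\sim i})$ can stray from zero. Specifically, I would decompose $I_p(S_i;S_{\sim i}) = H_p(S_i) - H_p(S_i\mid S_{\sim i})$ and observe that under $q$ both terms equal $H(S)$, so $I_q(S_i;S_{\sim i})=0$. Since marginalization cannot increase total variation, the hypothesis yields $||p_{S_i}-q_S||_1\leq\epsilon$ for every $i$, and a Csisz\'ar-type continuity bound then controls $|H_p(S_i)-H(S)|$ using only the single-letter alphabet $|\mathcal{S}|$---crucially avoiding any $n$-dependent penalty at the single-letter level.

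Next, I would sum $I_p(S_i;S_{\sim i})$ over $i$ and exploit the iid structure of $q$ to regroup the terms so that the only entropy differences appearing are over single-symbol marginals. A natural route is to couple $S^n\sim p$ with $\bar S^n\sim q_S^{(n)}$ such that $\Pr[S^n\neq \bar S^n]\leq\epsilon$, and to use this coupling to compare conditional entropies under $p$ with the unconditional marginals under $q$. Pinsker's inequality (or an equivalent coupling argument together with Fano-type estimates) is what is responsible for the $\sqrt{\epsilon}$ factor appearing in $g_1(\epsilon)$, since it converts the hypothesized total-variation closeness into a Kullback--Leibler-type estimate compatible with the mutual-information terms that are being summed.

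The main obstacle is ensuring that the upper bound scales only linearly in $n$. A naive application of Csisz\'ar's entropy-continuity lemma to the joint entropy $H_p(S^n)$ would incur an unwanted factor of $n\log|\mathcal{S}|$, yielding a useless $O(n^2)$ estimate; circumventing this requires working exclusively at the single-letter level and exploiting the product structure of $q_S^{(n)}$ so that the per-coordinate deviations telescope cleanly. The $2\sqrt{\epsilon}\bigl[H(S)+\log|\mathcal{S}|+\log(1/\sqrt{\epsilon})\bigr]$ form of $g_1$ strongly suggests a joint application of Pinsker's inequality with Csisz\'ar's lemma applied at the single-letter level, delivering the desired $n\,g_1(\epsilon)$ bound without accruing higher-order $n$-factors.
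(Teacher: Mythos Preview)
The paper does not prove this lemma; it is quoted verbatim with the citation \cite[Lemma~6]{CerviaLLB20} and used as a black box in the converse arguments. There is therefore no ``paper's own proof'' to compare against.

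As for the soundness of your sketch: you correctly identify the central difficulty---that a naive application of entropy continuity to $H_p(S^n)$ or to $H_p(S_{\sim i})$ picks up a factor of $n\log|\mathcal{S}|$ and yields an $O(n^2)$ bound---but you do not actually resolve it. The step where you propose to ``regroup the terms so that the only entropy differences appearing are over single-symbol marginals'' is the whole crux, and it is left unexplained: the sum $\sum_i H_p(S_i\mid S_{\sim i})$ neither telescopes nor reduces to single-letter quantities in any obvious way, and the coupling you mention is not developed into a concrete inequality. Your appeal to Pinsker's inequality is also in the wrong direction: Pinsker bounds total variation by $\sqrt{D(\cdot\|\cdot)}$, whereas here you start from a total-variation hypothesis and need to control mutual-information (i.e., KL-type) quantities, so Pinsker as stated does not produce the $\sqrt{\epsilon}$ you are after. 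In short, the sketch names the obstacle but does not supply the mechanism that overcomes it; for a complete argument you should consult the proof in \cite{CerviaLLB20} directly.
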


\begin{lemma} \cite[Lemma VI.3]{cuff2013distributed} \label{lem:c2}
Let $p_{S^n}$ be such that $||p_{S^n}-q_S^{(n)}||_1 \leq \epsilon$, where $q^{(n)}_S(s^n)=\prod_{i=1}^n q_S(s_i)$, then for any RV $T \in [1:n]$ independent of $S^n$,
\begin{align}
I_p(T;S_T) \leq g_2(\epsilon),
\end{align}
where ${ g_2(\epsilon)=4\epsilon\left(\log{|\mathcal{S}|}+\log{\frac{1}{\epsilon}}\right)} \to 0$ as $\epsilon \to 0$.
\end{lemma}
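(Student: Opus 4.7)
The plan is to exploit the fact that under the exact i.i.d. product $q_S^{(n)}$, the random variable $S_T$ has distribution $q_S$ regardless of the value of $T$ (since $T \independent S^n$), and hence $I_q(T;S_T)=0$. The entire job of the proof is therefore to quantify how much $\epsilon$-closeness in total variation between $p_{S^n}$ and $q_S^{(n)}$ can inflate $I_p(T;S_T)$ away from zero, which is a pure continuity-of-entropy argument.

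First I would decompose $I_p(T;S_T) = H_p(S_T) - H_p(S_T\mid T)$, and use $T \independent S^n$ to write
\begin{align*}
H_p(S_T\mid T) &= \sum_t \Pr(T=t)\, H_p(S_t), \\
p_{S_T}(s) &= \sum_t \Pr(T=t)\, p_{S_t}(s).
\end{align*}
Next I would observe that marginalization cannot increase total variation, so $\|p_{S_t} - q_S\|_1 \leq \|p_{S^n}-q_S^{(n)}\|_1 \leq \epsilon$ for every $t$, and by the triangle inequality the mixture obeys $\|p_{S_T} - q_S\|_1 \leq \sum_t \Pr(T=t)\|p_{S_t}-q_S\|_1 \leq \epsilon$ as well.

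Then I would invoke a Fannes-type continuity bound of the form $|H(p)-H(q)| \leq 2\epsilon(\log|\mathcal{S}| + \log(1/\epsilon))$ whenever $\|p-q\|_1 \leq \epsilon \leq 1/4$ (which is the same uniform-continuity estimate the paper already cites when moving from $p$ to $\hat{p}$ in the OSRB argument via \cite[Theorem 17.3.3]{cover2012elements}). Applying this once to $p_{S_T}$ gives $|H_p(S_T) - H_q(S)| \leq 2\epsilon(\log|\mathcal{S}| + \log(1/\epsilon))$, and applying it inside the sum to each $p_{S_t}$ and then averaging against $\Pr(T=t)$ gives the same bound for $|H_p(S_T\mid T) - H_q(S)|$. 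A final triangle inequality combines the two deviations to yield $I_p(T;S_T) \leq 4\epsilon(\log|\mathcal{S}| + \log(1/\epsilon)) = g_2(\epsilon)$.

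The only delicate step is making sure the Fannes-type bound is applied with a uniform constant so that averaging over $t$ against $\Pr(T=t)$ preserves the $\epsilon$-slope rather than inflating it; this is handled automatically because the constant in the continuity bound depends only on $|\mathcal{S}|$ and the uniform upper bound $\epsilon$ on the marginal TV distances, not on $t$. Everything else is routine, and the resulting bound matches the claimed $g_2(\epsilon)$ up to the standard constants.
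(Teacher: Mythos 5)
Your argument is correct: the decomposition $I_p(T;S_T)=H_p(S_T)-H_p(S_T\mid T)$, the observation that each marginal $p_{S_t}$ (and hence the mixture $p_{S_T}$) is within $\epsilon$ of $q_S$ in total variation, and a uniform Fannes/Cover--Thomas continuity bound applied to both terms yield exactly $g_2(\epsilon)=4\epsilon\left(\log|\mathcal{S}|+\log\frac{1}{\epsilon}\right)$, with the implicit restriction $\epsilon\le 1/4$ (so that the unhalved $\ell_1$ distance is at most $1/2$) matching the regime in which the paper invokes the lemma. The paper does not reprove this statement but quotes it from \cite[Lemma VI.3]{cuff2013distributed}, and your proof is essentially the standard argument behind that cited result.
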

We now take $S=(X_1,X_2,W,Y)$ in Lemmas~\ref{lem:c1} and~\ref{lem:c2}.
{ Notice that for $\epsilon\in(0,\frac{1}{4}]$, one obtains
\begin{align}
\max\{g_1(\epsilon),g_2(\epsilon)\}&\leq g(\epsilon) \notag\\
&:=2\sqrt{\epsilon}\left(H(S)+\log|\mathcal{\tilde Z}|+\log{|\mathcal{S}|+2\log{\frac{1}{\sqrt{\epsilon}}}}\right) \notag\\
&=2\sqrt{\epsilon}\biggl(H(X_1,X_2,W,Y)+\log{|\mathcal{X}_1||\mathcal{X}_2||\mathcal{W}||\mathcal{Y}||\mathcal{\tilde Z}|+2\log{\frac{1}{\sqrt{\epsilon}}}}\biggr),
\end{align}
which is exactly the definition of $g(\epsilon)$ in Theorem~\ref{thm:encsideOBnoisy}.
Thus, one can replace $g_1(\epsilon)$ and $g_2(\epsilon)$ in Lemmas~\ref{lem:c1} and \ref{lem:c2} by $g(\epsilon)$, which satisfies $\lim_{\epsilon\to 0} g(\epsilon) = 0$ as well.}


We first prove the strong secrecy constraint. We can bound the total variation distance between the induced joint distribution on the random variables $(X_1^n,X_2^n,W^n,Y^n,\tilde Z^n)$ and the product of the induced distributions on $(X_1^n,X_2^n,W^n,Y^n)$ and $\tilde Z^n$ as follows:
\begin{align}
\lVert p_{X_1^n,X_2^n,W^n,Y^n,\tilde Z^n}-p_{X_1^n,X_2^n,W^n,Y^n}p_{\tilde Z^n}\rVert_1 &\stackrel{(a)}\leq \lVert p_{X_1^n,X_2^n,W^n,Y^n,\tilde Z^n}-p_{\tilde Z^n}q^{(n)}_{X_1X_2WY}\rVert_1 \notag\\
&\phantom{www}+\lVert p_{X_1^n,X_2^n,W^n,Y^n}p_{\tilde Z^n}-p_{\tilde Z^n}q^{(n)}_{X_1X_2WY}\rVert_1 \notag\\
&= \lVert p_{X_1^n,X_2^n,W^n,Y^n,\tilde Z^n}-p_{\tilde Z^n}q^{(n)}_{X_1X_2WY}\rVert_1 \notag\\
&\phantom{www}+\lVert p_{X_1^n,X_2^n,W^n,Y^n}-q^{(n)}_{X_1X_2WY}\rVert_1 \notag\\
&\stackrel{(b)}\leq 2\lVert p_{X_1^n,X_2^n,W^n,Y^n,\tilde Z^n}-p_{\tilde Z^n}q^{(n)}_{X_1X_2WY}\rVert_1 \notag\\
&\stackrel{(c)}\leq 2\epsilon, \label{eq:refe00inoisy}
\end{align}
where (a) follows from triangle inequality, (b) follows from \cite[Lemma V.1]{cuff2013distributed}, while (c) follows by the definition of a successful code, i.e., from \eqref{eqn:totalSRnoisy}. Consequently, it follows from \cite[Theorem 17.3.3]{cover2012elements} that the mutual information $I(X_1^n,X_2^n,W^n,Y^n;\tilde Z^n)$ of interest can be bounded as
\begin{align}
&I(X_1^n,X_2^n,W^n,Y^n;\tilde Z^n) \notag\\
&\phantom{ww}=H(X_1^n,X_2^n,W^n,Y^n)+H(\tilde Z^n)-H(X_1^n,X_2^n,W^n,Y^n,\tilde Z^n) \notag\\
&\phantom{ww}\leq 4n\epsilon\biggl(\log|\mathcal{X}_1|+\log|\mathcal{X}_2|+\log|\mathcal{W}|+\log|\mathcal{Y}| +\log|\mathcal{\tilde Z}|+\log\frac{1}{\epsilon}\biggr) \notag\\
&\phantom{ww}\leq ng(\epsilon). \label{eq:refe0inoisy}
\end{align} 
We now prove the first inequality in Theorem~\ref{thm:encsideOBnoisy}. Note that by the functional representation lemma (FRL)~\cite{el2011network}, the randomized encoders $p^{\emph{Enc}_j}(\tilde{x}_j^n|x_j^n,k_j)$ for $j \in \{1,2\}$ in Definition~\ref{defn:codenoisy} can be expressed as deterministic mappings such that
\begin{align}
H(\tilde{X}_j^n|K_j,X_j^n,\Theta_j)=0,
\end{align}
where $(\Theta_1,\Theta_2)$ is independent of $(K_1,K_2,X_1^n,X_2^n)$.
Then we have following the chain of inequalities:
\begin{align}
I(X_1^n;\tilde{Y}^n|K_1,K_2,X_2^n,W^n,\Theta_1,\Theta_2) &= \sum_{i=1}^n I(X_{1i};\tilde{Y}^n|X_{1,i+1}^n,K_1,K_2,X_2^n,W^n,\Theta_1,\Theta_2) \notag\\
&\stackrel{(a)}= \sum_{i=1}^n I(X_{1i};\tilde{Y}^n,X_{1,i+1}^n,K_1,X_{2}^{i-1},W_{\sim i}, \notag\\
&\phantom{wwwwwwwwwwwwwwww}\Theta_1,\Theta_2|X_{2i},W_i,K_2,X_{2,i+1}^n) \notag\\
&\geq \sum_{i=1}^n I(X_{1i};\tilde{Y}^n,W_{\sim i},X_{1,i+1}^n,K_1|X_{2i},W_i,K_2,X_{2,i+1}^n) \notag\\
&\stackrel{(b)}= \sum_{i=1}^n I(X_{1i};U_{1i}|U_{2i},X_{2i},W_i) \notag\\
&= nI(X_{1T};U_{1T}|U_{2T},X_{2T},W_T,T) \notag\\
&= nI(U_{1};X_1|U_{2},X_{2},W,T), \label{refnoisy3}
\end{align}
where (a) follows due to the independence of $(K_{[1:2]},\Theta_{[1:2]})$ from $(X_1^n,X_2^n,W^n)$ and the joint i.i.d. nature of $(X_{1i},X_{2i},W_i),i=1,\dots,n,$ while (b) follows from an auxiliary random variable identification $U_{1i}=(K_1,X_{1,i+1}^n,\tilde{Y}^n,W_{\sim i})$ and $U_{2i}=(K_2,X_{2,i+1}^n)$.

We next bound $I(X_1^n;\tilde{Y}^n|K_1,K_2,X_2^n,W^n,\Theta_1,\Theta_2)$ in another way as follows:
\begin{align}
I(X_1^n;\tilde{Y}^n|K_1,K_2,X_2^n,W^n,\Theta_1,\Theta_2) &= H(\tilde{Y}^n|K_1,K_2,X_2^n,W^n,\Theta_1,\Theta_2) \notag\\
&\phantom{www}-H(\tilde{Y}^n|K_1,K_2,X_1^n,X_2^n,W^n,\Theta_1,\Theta_2) \notag\\
&\leq H(\tilde{Y}^n|K_2,X_2^n,\Theta_2) \notag\\
&\phantom{www}-H(\tilde{Y}^n|K_1,K_2,X_1^n,X_2^n,W^n,\tilde{X}_1^n,\tilde{X}_2^n,\Theta_1,\Theta_2) \notag\\
&\stackrel{(a)}= H(\tilde{Y}^n|K_2,X_2^n,\Theta_2,\tilde{X}_2^n)-H(\tilde{Y}^n|\tilde{X}_1^n,\tilde{X}_2^n) \notag\\
&\leq H(\tilde{Y}^n|\tilde{X}_2^n)-H(\tilde{Y}^n|\tilde{X}_1^n,\tilde{X}_2^n) \notag\\
&\stackrel{(b)}\leq \sum_{i=1}^n H(\tilde{Y}_i|\tilde{X}_{2i})-\sum_{i=1}^n H(\tilde{Y}_i|\tilde{X}_{1i},\tilde{X}_{2i}) \notag\\
&= nI(\tilde{X}_{1T};\tilde{Y}_T|\tilde{X}_{2T},T) = nI(\tilde{X}_{1};\tilde{Y}|\tilde{X}_{2},T), \label{refnoisy4}
\end{align}
where (a) follows since $\tilde{X}_2^n$ is a deterministic function of $(K_2,X_2^n,\Theta_2)$ and since $\tilde{Y}^n \fooA (\tilde{X}_1^n,\tilde{X}_2^n) \fooA (K_1,K_2,X_1^n,X_2^n,W^n,\Theta_1,\Theta_2)$ is a Markov chain. Here (b) follows since conditioning does not increase the entropy and the memoryless nature of the channel $p(\tilde{y}|\tilde{x}_1,\tilde{x}_2)$. 
From \eqref{refnoisy3} and \eqref{refnoisy4}, it follows that
\begin{align}
I(\tilde{X}_{1};\tilde{Y}|\tilde{X}_{2},T) \geq I(U_{1};X_1|U_{2},X_2,W,T).
\end{align}
The inequality $I(\tilde{X}_{2};\tilde{Y}|\tilde{X}_{1},T) \geq I(U_{2};X_2|U_{1},X_1,W,T)$ follows analogously. 
We prove the third inequality as follows. Since $\tilde{Y}^n \fooA (\tilde{X}_1^n,\tilde{X}_2^n) \fooA (K_1,K_2,X_1^n,X_2^n,W^n)$ is a Markov chain, we can write by data processing inequality~\cite{cover2012elements}
\begin{align}
0&\leq I(\tilde{X}_1^n,\tilde{X}_2^n;\tilde{Y}^n)-I(K_1,K_2,X_1^n,X_2^n,W^n;\tilde{Y}^n) \notag\\
&\leq I(\tilde{X}_1^n,\tilde{X}_2^n;\tilde{Y}^n)-I(X_1^n,X_2^n;\tilde{Y}^n|K_1,K_2,W^n) \notag\\
&\stackrel{(a)}\leq \sum_{i=1}^n I(\tilde{X}_{1i},\tilde{X}_{2i};\tilde{Y}_i)-\sum_{i=1}^n I(X_{1i},X_{2i};\tilde{Y}^n|K_1,K_2,X_{1,i+1}^n,X_{2,i+1}^n,W^n) \notag\\
&\stackrel{(b)}= \sum_{i=1}^n I(\tilde{X}_{1i},\tilde{X}_{2i};\tilde{Y}_i)-\sum_{i=1}^n I(X_{1i},X_{2i};\tilde{Y}^n,K_1,K_2,X_{1,i+1}^n,X_{2,i+1}^n,W_{\sim i}|W_i) \notag\\
&= \sum_{i=1}^n I(\tilde{X}_{1i},\tilde{X}_{2i};\tilde{Y}_i)-\sum_{i=1}^n I(X_{1i},X_{2i};U_{1i},U_{2i}|W_i) \notag\\
&= nI(\tilde{X}_{1T},\tilde{X}_{2T};\tilde{Y}_T|T)-nI(X_{1T},X_{2T};U_{1T},U_{2T}|W_T,T) \notag\\
&= nI(\tilde{X}_{1},\tilde{X}_{2};\tilde{Y}|T)-nI(U_{1},U_{2};X_{1},X_{2}|W,T),
\end{align}
where (a) follows from the memoryless nature of the channel $p(\tilde{y}|\tilde{x}_1,\tilde{x}_2)$, while (b) follows due to the independence of $(K_1,K_2)$ from $(X_1^n,X_2^n,W^n)$ and the joint i.i.d. nature of $(X_{1i},X_{2i},W_i),i=1,\dots,n$.

We next prove the sum-rate constraint on $(R_{01}+R_{02})$.
\begin{align}
n(R_{01}+R_{02}) &=H(K_1,K_2) \notag\\
&\geq H(K_1,K_2|\tilde{Y}^n,W^n) \notag\\
&\geq I(K_1,K_2;X_1^n,X_2^n,Y^n|\tilde{Y}^n,W^n) \notag\\
&= I(K_1,K_2,\tilde{Y}^n;X_1^n,X_2^n,Y^n|W^n)-I(\tilde{Y}^n;X_1^n,X_2^n,Y^n|W^n) \notag\\
&\geq I(K_1,K_2,\tilde{Y}^n;X_1^n,X_2^n,Y^n|W^n)-I(\tilde{Y}^n;X_1^n,X_2^n,W^n,Y^n) \notag\\
&\phantom{ww}+I(\tilde Z^n;X_1^n,X_2^n,W^n,Y^n)-I(\tilde Z^n;X_1^n,X_2^n,W^n,Y^n)\notag\\
&\stackrel{(a)}\geq I(K_1,K_2,\tilde{Y}^n;X_1^n,X_2^n,Y^n|W^n)-ng(\epsilon) \notag\\
&\phantom{ww}-I(\tilde{Y}^n;X_1^n,X_2^n,W^n,Y^n)+I(\tilde Z^n;X_1^n,X_2^n,W^n,Y^n) \notag\\
&= I(K_1,K_2,\tilde{Y}^n;X_1^n,X_2^n,Y^n|W^n)-ng(\epsilon) \notag\\
&\phantom{ww}-I(\tilde{Y}^n;X_1^n,X_2^n,W^n)+I(\tilde Z^n;X_1^n,X_2^n,W^n) \notag\\
&\phantom{ww}-I(\tilde{Y}^n;Y^n|X_1^n,X_2^n,W^n)+I(\tilde Z^n;Y^n|X_1^n,X_2^n,W^n) \notag\\
&\geq I(K_1,K_2,\tilde{Y}^n;X_1^n,X_2^n,Y^n|W^n)-ng(\epsilon) \notag\\
&\phantom{w}-I(\tilde{Y}^n;X_1^n,X_2^n,W^n|\tilde Z^n)\!-\!H(Y^n|X_1^n,X_2^n,W^n) \notag\\
&\stackrel{(b)}\geq I(K_1,K_2,\tilde{Y}^n;X_1^n,X_2^n,Y^n|W^n)-ng(\epsilon) \notag\\
&\phantom{ww}-I(\tilde{Y}^n;\tilde{X}_1^n,\tilde{X}_2^n|\tilde Z^n)-H(Y^n|X_1^n,X_2^n,W^n) \notag\\
&\geq \sum_{i=1}^n I(K_1,K_2,\tilde{Y}^n;X_{1i},X_{2i},Y_i|X_{1,i+1}^n,X_{2,i+1}^n,Y_{i+1}^n,W^n) \notag\\
&\phantom{ww}-\sum_{i=1}^n I(\tilde{Y}_i;\tilde{X}_1^n,\tilde{X}_2^n|\tilde Z^n,\tilde{Y}^{i-1}) -\sum_{i=1}^n H(Y_i|X_{1i},X_{2i},W_i) -ng(\epsilon) \notag\\
&\stackrel{(c)}= \sum_{i=1}^n I(K_1,K_2,\tilde{Y}^n,X_{1,i+1}^n,X_{2,i+1}^n,Y_{i+1}^n,W_{\sim i};X_{1i},X_{2i},Y_i|W_i) \notag\\
&\phantom{ww}-\sum_{i=1}^n I(X_{1,i+1}^n,X_{2,i+1}^n,Y_{i+1}^n,W_{\sim i};X_{1i},X_{2i},Y_i|W_i) \notag\\
&\phantom{ww}-\sum_{i=1}^n I(\tilde{Y}_i;\tilde{X}_{1i},\tilde{X}_{2i}|\tilde Z_i)-\sum_{i=1}^n H(Y_i|X_{1i},X_{2i},W_i)-ng(\epsilon) \notag\\
&\stackrel{(d)}\geq \sum_{i=1}^n I(K_1,K_2,\tilde{Y}^n,W_{\sim i},X_{1,i+1}^n,X_{2,i+1}^n;X_{1i},X_{2i},Y_i|W_i) \notag\\
&\phantom{ww}-\sum_{i=1}^n I(\tilde{Y}_i;\tilde{X}_{1i},\tilde{X}_{2i}|\tilde Z_i)-\sum_{i=1}^n H(Y_i|X_{1i},X_{2i},W_i)-2ng(\epsilon) \notag\\
&= \sum_{i=1}^n I(U_{1i},U_{2i};X_{1i},X_{2i},Y_i|W_i) \notag\\
&\phantom{ww}-\sum_{i=1}^n I(\tilde{Y}_i;\tilde{X}_{1i},\tilde{X}_{2i}|\tilde Z_i)-\sum_{i=1}^n H(Y_i|X_{1i},X_{2i},W_i)-2ng(\epsilon) \notag\\
&= nI(U_{1T},U_{2T};X_{1T},X_{2T},Y_T|W_T,T) \notag\\
&\phantom{ww}-nI(\tilde{Y}_T;\tilde{X}_{1T},\tilde{X}_{2T}|\tilde Z_T,T)-n H(Y_T|X_{1T},X_{2T},W_T,T)-2ng(\epsilon) \notag\\
&= nI(U_{1},U_{2};X_{1},X_{2},Y|W,T) \notag\\
&\phantom{ww}-nI(\tilde{X}_{1},\tilde{X}_{2};\tilde{Y}|\tilde Z,T)-n H(Y|X_{1},X_{2},W,T)-2ng(\epsilon),
\end{align}
where (a) follows from the strong secrecy constraint \eqref{eq:refe0inoisy}, (b) follows since $(X_1^n,X_2^n,W^n) \fooA (\tilde{X}_1^n,\tilde{X}_2^n) \fooA (\tilde{Y}^n,\tilde Z^n)$ is a Markov chain and the data processing inequality,
(c) follows from the memoryless nature of the channel $p(\tilde{y},\tilde z|\tilde{x}_1,\tilde{x}_2)$, while (d) follows from \eqref{eqn:totalSRnoisy} and Lemma~\ref{lem:c1}. The individual bounds on $R_{01}$ and $R_{02}$ follow in a similar manner.

\section{Converse Proof of Theorem \ref{thm:indepnoisy}}\label{proof:conv-thm5}
Consider a coding scheme that induces a joint distribution on $(X_1^n,X_2^n,W^n,Y^n,\tilde Z^n)$ which satisfies
\begin{gather}
\lVert p_{X_1^n,X_2^n,W^n,Y^n,\tilde Z^n}-p_{\tilde Z^n} \cdot q^{(n)}_{X_1X_2WY}\rVert_1 \leq \epsilon,\label{eqn:totalSR1noisy}
\end{gather}
for $\epsilon\in(0,\frac{1}{4}]$. Let us now prove the first inequality in Theorem~\ref{thm:indepnoisy}. Recall that $\tilde Y=(f_1(\tilde X_1),f_2(\tilde X_2)) \triangleq (\tilde Y_1,\tilde Y_2)$. 
We have following the chain of inequalities:
\begin{align}
0 &\leq H(\tilde Y_1^n)-I(\tilde Y_1^n; K_1,X_1^n,W^n) \notag\\
&\leq H(\tilde Y_1^n)-I(\tilde Y_1^n; X_1^n|K_1,W^n) \notag\\
&\leq \sum_{i=1}^n H(\tilde Y_{1i})-\sum_{i=1}^n I(\tilde Y_1^n; X_{1i}|K_1,X_{1,i+1}^n,W^n) \notag\\
&= \sum_{i=1}^n H(\tilde Y_{1i})-\sum_{i=1}^n I(\tilde Y_1^n; X_{1i}|K_1,X_{1,i+1}^n,W_{\sim i},W_i) \notag\\
&\stackrel{(a)}= \sum_{i=1}^n H(\tilde Y_{1i})-\sum_{i=1}^n I(K_1,\tilde Y_1^n,X_{1,i+1}^n,W_{\sim i}; X_{1i}|W_i) \notag\\
&\stackrel{(b)}= \sum_{i=1}^n H(\tilde Y_{1i})-\sum_{i=1}^n I(U_{1i}; X_{1i}|W_i) \notag\\
&= nH(\tilde Y_{1T}|T)-nI(U_{1T}; X_{1T}|W_T,T) \notag\\
&= nH(\tilde Y_{1}|T)-nI(U_{1}; X_{1}|W,T), \label{refnoisy1}
\end{align}
where (a) follows from the joint i.i.d. nature of $(X_{1i},W_i),i=1,\dots,n$ and the independence of $K_1$ from $(X_1^n,W^n)$, while (b) follows from an auxiliary random variable identification $U_{1i}=(K_1,X_{1,i+1}^n,\tilde Y_{1}^n,W_{\sim i})$.
The inequality $H(\tilde{Y}_{2}|T) \geq I(U_{2};X_2|W,T)$ follows analogously, with an auxiliary random variable choice given by $U_{2i}=(K_2,X_{2,i+1}^n,\tilde Y_{2}^n)$. 

The lower bound on the shared randomness rate $R_{01}$ is proved next. Consider the sequence of inequalities below:
\begin{align}
nR_{01} &= H(K_1) \notag\\
&\geq H(K_1|\tilde Y_1^n,X_2^n,W^n) \notag\\
&\geq I(K_1;X_1^n,Y^n|\tilde Y_1^n,X_2^n,W^n) \notag\\
&= I(K_1,\tilde Y_1^n;X_1^n,Y^n|X_2^n,W^n)-I(\tilde Y_1^n;X_1^n,Y^n|X_2^n,W^n) \notag\\
&= I(K_1,\tilde Y_1^n;X_1^n,Y^n|X_2^n,W^n)-I(\tilde Y_1^n;X_1^n,Y^n|X_2^n,W^n) \notag\\
&\phantom{www}+I(\tilde Z^n;X_1^n,Y^n|X_2^n,W^n)-I(\tilde Z^n;X_1^n,Y^n|X_2^n,W^n) \notag\\
&\stackrel{(a)}\geq I(K_1,\tilde Y_1^n;X_1^n,Y^n|X_2^n,W^n)-ng(\epsilon)-I(\tilde Y_1^n;X_1^n,Y^n|\tilde Z^n,X_2^n,W^n) \notag\\
&\geq I(K_1,\tilde Y_1^n;X_1^n,Y^n|X_2^n,W^n)-H(\tilde Y_1^n|\tilde Z^n)-ng(\epsilon) \notag\\
&\geq \sum_{i=1}^n I(K_1,\tilde Y_1^n;X_{1i},Y_i|X_{1,i+1}^n,Y_{i+1}^n,X_2^n,W^n) -\sum_{i=1}^n H(\tilde Y_{1i}|\tilde Z_i)-ng(\epsilon) \notag\\
&= \sum_{i=1}^n I(K_1,\tilde Y_1^n,X_{1,i+1}^n,Y_{i+1}^n,X_{2 \sim i}, W_{\sim i};X_{1i},Y_i|X_{2i},W_i) \notag\\
&\phantom{ww}-\sum_{i=1}^n I(X_{1,i+1}^n,Y_{i+1}^n,X_{2 \sim i},W_{\sim i};X_{1i},Y_i|X_{2i},W_i) -\sum_{i=1}^n H(\tilde Y_{1i}|\tilde Z_i)-ng(\epsilon) \notag\\
&\stackrel{(b)}\geq \sum_{i=1}^n I(K_1,\tilde Y_1^n,X_{1,i+1}^n,W_{\sim i};X_{1i},Y_i|X_{2i},W_i)-2ng(\epsilon)-\sum_{i=1}^n H(\tilde Y_{1i}|\tilde Z_i) \notag\\
&= \sum_{i=1}^n I(U_{1i};X_{1i},Y_i|X_{2i},W_i)-\sum_{i=1}^n H(\tilde Y_{1i}|\tilde Z_i)-2ng(\epsilon) \notag\\
&= nI(U_{1T};X_{1T},Y_T|X_{2T},W_T,T) -n H(\tilde Y_{1T}|\tilde Z_T,T)-2ng(\epsilon) \notag\\
&= nI(U_{1};X_{1},Y|X_{2},W,T) -n H(\tilde Y_{1}|\tilde Z,T)-2ng(\epsilon),
\end{align}
where (a) follows from the strong secrecy constraint \eqref{eq:refe0inoisy}, while (b) follows by \eqref{eqn:totalSR1noisy} and Lemma~\ref{lem:c1}.

All that remains to complete the proof of Theorem~\ref{thm:indepnoisy} is the proof of continuity of the derived converse bound at $\epsilon=0$ (note that $g(0):=0$ through continuous extension of the function $g(\epsilon)$). To establish such continuity, it is necessary to establish cardinality bounds on the auxiliary random variables $(U_{1},U_{2})$ to ensure the compactness of the simplex, as outlined in \cite[Lemma VI.5]{cuff2013distributed} and \cite[Lemma 6]{YassaeeGA15}. The cardinality bound $|\mathcal{T}| \leq 3$ on $T$ follows using Caratheodory's theorem~\cite[Appendix A]{el2011network}. On the other hand, the cardinalities of $U_{1}$ and $U_{2}$ can be restricted to $|\mathcal{U}_{1}| \leq |\mathcal{X}_1||\mathcal{X}_2||\mathcal{W}||\mathcal{Y}|$ and $|\mathcal{U}_{2}| \leq |\mathcal{U}_{1}||\mathcal{X}_1||\mathcal{X}_2||\mathcal{W}||\mathcal{Y}|$ respectively following the perturbation method of \cite{gohari2012evaluation}, as detailed in Appendix~\ref{app:cardbndnoisy}.
Finally, by invoking the continuity properties of total variation distance and mutual information in the probability simplex, similar to the approach in \cite[Lemma VI.5]{cuff2013distributed} and \cite[Lemma 6]{YassaeeGA15}, the converse proof for Theorem~\ref{thm:indepnoisy} is complete.

\section{Proof of Theorem~\ref{thm:encsideIBnoisyK0}} \label{proof:thm4}
We only outline the key differences with respect to the proof of Theorem~\ref{thm:encsideIBnoisy}. We define two protocols as follows.\\
\underline{Random Binning Protocol:} Let the random variables $(U_{1}^n,V_{1}^n,U_{2}^n,V_{2}^n,X_1^n,X_2^n,W^n,\tilde{X}_1^n,\tilde{X}_2^n,\tilde{Y}^n,\tilde Z^n,Y^n)$ be drawn i.i.d. according to the joint distribution
\begin{align}
p(x_1,x_2,w,u_{1},v_{1},u_{2},v_{2},\tilde{x}_1,\tilde{x}_2,\tilde{y},\tilde z,y) &=p(x_1,x_2,w)p(u_{2},v_2|x_2)p(\tilde{x}_2|v_{2})p(u_{1},v_1|x_1,u_2,\tilde x_2) \notag\\
&\phantom{ww}\times p(\tilde{x}_1|v_{1},\tilde x_2)p(\tilde{y},\tilde z|\tilde{x}_1,\tilde{x}_2)p(y|u_{1},u_{2},w)
\end{align}
such that the marginal $p(x_1,x_2,w,y)=q(x_1,x_2,w,y)$ and the random variables $(U_{1},U_{2},X_1,X_2,W,Y)$ are independent of $(V_{1},V_{2},\tilde{X}_1,\tilde{X}_2,\tilde{Y},\tilde Z)$. 
The random binning and Slepian-Wolf decoder definitions follow as in the proof of Theorem~\ref{thm:encsideIBnoisy}.
The random p.m.f. induced by this binning scheme is given as follows:
\begin{align*}
&P(x_1^n,x_2^n,w^n,y^n,u_{1}^n,u_{2}^n,v_{1}^n,v_{2}^n,\tilde{x}_1^n,\tilde{x}_2^n,\tilde{y}^n,\tilde z^n,k_1,f_1,k_2,f_2,\hat{u}_{1}^n,\hat{u}_{2}^n) \notag\\
&= p(x_1^n,x_2^n,w^n) P(k_2,f_2|x_2^n) P(u_{2}^n,v_{2}^n|k_2,f_2,x_2^n)p(\tilde{x}_2^n|v_{2}^n)  \notag\\
&\phantom{ww} \times  P(k_1,f_1|x_1^n,u_2^n,\tilde x_2^n) P(u_{1}^n,v_{1}^n|k_1,f_1,x_1^n,\tilde x_2^n) p(\tilde{x}_1^n|v_{1}^n,\tilde x_2^n) \notag\\
&\phantom{ww} \times p(\tilde{y}^n,\tilde z^n|\tilde{x}_1^n,\tilde{x}_2^n)P^{SW}(\hat{u}_{1}^n,\hat{u}_{2}^n|k_1,f_1,k_2,f_2,w^n,\tilde{y}^n) p(y^n|u_{1}^n,u_{2}^n,w^n). 
\end{align*}

\noindent \underline{Random Coding Protocol:} The main difference compared to the proof of Theorem~\ref{thm:encsideIBnoisy} is that Encoder $1$ now observes $(k_1,f_1,x_1^n,\tilde x_2^n)$, and generates $(u_{1}^n,v_1^n)$ according to the p.m.f. $P(u_{1}^n,v_1^n|k_1,f_1,x_1^n,\tilde x_2^n)$ from the binning protocol above. Further, Encoder $1$ then creates the channel input $\tilde{x}_1^n$ according to the p.m.f. $p(\tilde{x}_1^n|v_{1}^n,\tilde x_2^n)$ for transmission over the channel. 
The induced random p.m.f. from the random coding scheme described is
\begin{align*}
&\hat{P}(x_1^n,x_2^n,w^n,y^n,u_{1}^n,u_{2}^n,v_{1}^n,v_{2}^n,\tilde{x}_1^n,\tilde{x}_2^n,\tilde{y}^n,\tilde z^n,k_1,f_1,k_2,f_2,\hat{u}_{1}^n,\hat{u}_{2}^n) \notag\\
&= p^{\text{U}}(k_2)p^{\text{U}}(f_2)p^{\text{U}}(k_1)p^{\text{U}}(f_1)p(x_1^n,x_2^n,w^n)  \notag\\
&\phantom{ww} \times P(u_{2}^n,v_{2}^n|k_2,f_2,x_2^n) p(\tilde{x}_2^n|v_{2}^n) P(u_{1}^n,v_{1}^n|k_1,f_1,x_1^n,\tilde x_2^n) p(\tilde{x}_1^n|v_{1}^n,\tilde x_2^n)  \notag\\
&\phantom{ww} \times p(\tilde{y}^n,\tilde z^n|\tilde{x}_1^n,\tilde{x}_2^n) P^{SW}(\hat{u}_{1}^n,\hat{u}_{2}^n|k_1,f_1,k_2,f_2,w^n,\tilde{y}^n) p(y^n|\hat{u}_{1}^n,\hat{u}_{2}^n,w^n). 
\end{align*}

We next establish constraints that lead to nearly identical induced p.m.f.'s from both protocols.

\noindent \underline{Analysis of Rate Constraints:}\\
Using the fact that $(k_j,f_j)$ are bin indices of $(u_{j}^n,v_j^n)$ for $j \in \{1,2\}$, we impose the conditions
\begin{align}
R_{01}+\tilde{R}_1 &\leq H(U_{1},V_1|X_1,U_2,\tilde X_2) \notag\\
&=H(U_1|X_1,U_2)+H(V_1|\tilde X_2), \label{eq:cond11noisyK0} \\
R_{02}+\tilde{R}_2 &\leq  H(U_{2},V_2|X_2)=H(U_2|X_2)+H(V_2), \label{eq:cond12noisyK0}
\end{align}
that ensure, by invoking \cite[Theorem 1]{yassaee2014achievability}
\begin{align*}
&P(x_1^n,x_2^n,w^n,u_{1}^n,u_{2}^n,v_{1}^n,v_{2}^n,\tilde{x}_1^n,\tilde{x}_2^n,\tilde{y}^n,\tilde z^n,k_1,f_1,k_2,f_2,\hat{u}_{1}^n,\hat{u}_{2}^n) \notag\\
&\phantom{w} \approx \hat{P}(x_1^n,x_2^n,w^n,u_{1}^n,u_{2}^n,v_{1}^n,v_{2}^n,\tilde{x}_1^n,\tilde{x}_2^n,\tilde{y}^n,\tilde z^n,k_1,f_1,k_2,f_2,\hat{u}_{1}^n,\hat{u}_{2}^n).
\end{align*}

The conditions for the success of the Slepian-Wolf decoder and the elimination of additional shared randomness variables $(F_1,F_2)$, along with the proof of strong coordination of the joint p.m.f. of the random variables $(X_1^n,X_2^n,W^n,Y^n)$ and the secrecy constraint follow similarly to the proof of Theorem~\ref{thm:encsideIBnoisy}.
Finally on eliminating $(\tilde{R}_1,\tilde{R}_2)$ from equations \eqref{eq:cond11noisyK0} -- \eqref{eq:cond12noisyK0}, \eqref{eq:cond21noisy} -- \eqref{eq:cond23noisy} and \eqref{eq:cond31secnoisy} -- \eqref{eq:cond33secnoisy} by the FME procedure, we obtain the rate constraints in Theorem~\ref{thm:encsideIBnoisyK0}.

\section{Converse Proof of Proposition~\ref{prop:1}}\label{sec:exproof}
For the converse, we need to prove that for any p.m.f. 
\begin{align*}
p(x_1,&x_2,u_{1},u_{2},\tilde{x}_1,\tilde{x}_2,\tilde z,y)=p(x_1) p(x_2) p(u_{1}|x_1) p(u_{2}|x_2) p(\tilde x_1) p(\tilde x_2) p(\tilde z|\tilde x_1,\tilde x_2) p(y|u_{1},u_{2})
\end{align*}
such that $$\sum\limits_{u_{1},u_{2},\tilde{x}_1,\tilde{x}_2,\tilde z}p(x_1,x_2,u_{1},u_{2},\tilde{x}_1,\tilde{x}_2,\tilde z,y)=q(x_1,x_2,y),$$ we have $I(U_1;X_1) \geq 1$ and $I(U_2;X_2) \geq 1$ (notice that the third inequality $I(U_1;X_1,Y|X_2)-H(\tilde X_1|\tilde Z) \geq 1$ reduces back to $I(U_1;X_1) \geq 1$ since $Y$ is determined by $(X_1,X_2)$ and $\tilde X_1$ is determined by $\tilde Z$ in this example). The independence between $X_1$ and $X_2$ along with the long Markov chain $U_1 \fooA X_1 \fooA X_2 \fooA U_2$ ensures that $(U_1,X_1)$ is independent of $(U_2,X_2)$. We also have the output Markov chain $Y \fooA (U_1,U_2) \fooA (X_1,X_2)$. Clearly, if $H(X_1|U_1)=H(X_2|U_2)=0$, it follows that $I(U_1;X_1)=H(X_1)=1$ and $I(U_2;X_2)=H(X_2)=1$. We now prove that if either $H(X_1|U_1)>0$ or $H(X_2|U_2)>0$, a contradiction arises.

Let $H(X_2|U_2)>0$ (which means $I(U_2;X_2)=H(X_2)-H(X_2|U_2)=1-H(X_2|U_2)<1$) for the sake of contradiction. 
Hence there exists $u_2$ with $P(U_2=u_2)>0$ such that $p_{X_2|U_2=u_2}$ is supported on $\{0,1\}$. We note that the Markov chain $Y \fooA (X_1,U_2) \fooA X_2$ holds because
\begin{align}
I(Y;X_2|X_1,U_2) &\leq I(Y,U_1;X_2|X_1,U_2)\notag\\
&=\!I(U_1;X_2|X_1,U_2)+I(Y;X_2|U_1,U_2,X_1)\notag\\
&\leq\!  I(U_1,X_1;U_2,X_2)\!+I(Y;X_1,X_2|U_1,U_2)\notag\\
&=0, \label{eq:MCprop}
\end{align}
where the last equality follows because $(U_1,X_1)$ is independent of $(U_2,X_2)$ and the Markov chain $Y \fooA (U_1,U_2) \fooA (X_1,X_2)$ holds.

Let us consider the induced distribution given by $p_{Y|X_1=0,U_2=u_2}$. This is well-defined because $P(X_1=0,U_2=u_2)>0$ as $X_1$ is independent of $U_2$ and $P(X_1=0)>0,P(U_2=u_2)>0$. The fact that 
\begin{align*} 
&P(X_2=0|X_1=0,U_2=u_2)=P(X_2=0|U_2=u_2)>0
\end{align*}
along with the Markov chain $Y \fooA (X_1,U_2) \fooA X_2$ implies  
\begin{align}
&P(Y=0|X_1=0,U_2=u_2) =P(Y=0|X_1=0,U_2=u_2,X_2=0)=0, \label{eq:contra1}
\end{align}
and
\begin{align}
&P(Y=1|X_1=0,U_2=u_2) =P(Y=1|X_1=0,U_2=u_2,X_2=0)=0. \label{eq:contra2}
\end{align}
Likewise, the fact that 
\begin{align*}
&P(X_2=1|X_1=0,U_2=u_2)=P(X_2=1|U_2=u_2)>0
\end{align*}
along with the Markov chain $Y \fooA (X_1,U_2) \fooA X_2$ implies  
\begin{align}
&P(Y=\text{`e'}|X_1=0,U_2=u_2) =P(Y=\text{`e'}|X_1=0,U_2=u_2,X_2=1)=0. \label{eq:contra3} 
\end{align}
From \eqref{eq:contra1}--\eqref{eq:contra3}, we are led to a contradiction since $p_{Y|X_1=0,U_2=u_2}$ has to be a probability distribution.

Similarly, let $H(X_1|U_1)>0$ (which means $I(U_1;X_1)=H(X_1)-H(X_1|U_1)=1-H(X_1|U_1)<1$) for the sake of contradiction. Hence there exists $u_1$ with $P(U_1=u_1)>0$ such that $p_{X_1|U_1=u_1}$ is supported on $\{0,1\}$. We note that the Markov chain $Y \fooA (X_2,U_1) \fooA X_1$ holds because
\begin{align}
I(Y;X_1|X_2,U_1) &\leq I(Y,U_2;X_1|X_2,U_1)\notag\\
&=\!I(U_2;X_1|X_2,U_1)+I(Y;X_1|U_1,U_2,X_2)\notag\\
&\leq\!  I(U_1,X_1;U_2,X_2)\!+I(Y;X_1,X_2|U_1,U_2)\notag\\
&=0,
\end{align}
where the last equality follows because $(U_1,X_1)$ is independent of $(U_2,X_2)$ and the Markov chain $Y \fooA (U_1,U_2) \fooA (X_1,X_2)$ holds.

Let us consider the induced distribution given by $p_{Y|X_2=1,U_1=u_1}$. This is well-defined because $P(X_2=1,U_1=u_1)>0$ as $X_2$ is independent of $U_1$ and $P(X_2=1)>0,P(U_1=u_1)>0$.  The fact that 
\begin{align*}
P(X_1=1|X_2=1,U_1=u_1)=P(X_1=1|U_1=u_1)>0
\end{align*}
along with the Markov chain $Y \fooA (X_2,U_1) \fooA X_1$ implies  
\begin{align}
&P(Y=0|X_2=1,U_1=u_1) =P(Y=0|X_2=1,U_1=u_1,X_1=1)=0, \label{eq:contra4} 
\end{align}
and
\begin{align}
&P(Y=\text{`e'}|X_2=1,U_1=u_1) =P(Y=\text{`e'}|X_2=1,U_1=u_1,X_1=1)=0. \label{eq:contra5}  
\end{align}
Likewise, the fact that 
\begin{align*}
P(X_1=0|X_2=1,U_1=u_1)=P(X_1=0|U_1=u_1)>0
\end{align*}
along with the Markov chain $Y \fooA (X_2,U_1) \fooA X_1$ implies  
\begin{align}
&P(Y=1|X_2=1,U_1=u_1) =P(Y=1|X_2=1,U_1=u_1,X_1=0)=0. \label{eq:contra6} 
\end{align}
From \eqref{eq:contra4}--\eqref{eq:contra6}, we are led to a contradiction since $p_{Y|X_2=1,U_1=u_1}$ has to be a probability distribution.

\section{Conclusion} \label{sec:conc}
We investigated strong coordination over a MAC-WT with secrecy constraints. Inner and outer bounds were derived on the trade-off region of shared randomness rates. Due to the distributed nature of the encoding, there is an associated rate-loss that makes it challenging to obtain matching inner and outer bounds in general. However, we derived a tight characterization for the special case of conditionally independent sources and deterministic legitimate receiver's channel. Cooperation between the encoders in the form of cribbing was found to be a useful resource for secure channel simulation. Other multi-terminal configurations, such as for instance a corresponding broadcast channel coordination setting with secrecy constraints, appear to be interesting for further research.

\appendices
\section{Proof of Remark~\ref{rmk:noiselessMAC}} \label{app:noiselessMAC}
Firstly, let us take $\tilde Z=\tilde Y$ since the eavesdropper sees the same channel as the legitimate receiver. We then convert the MAC $p(\tilde{y}|\tilde{x}_1,\tilde{x}_2)$ into two independent noiseless links of rates $R_1$ and $R_2$. In particular, one can consider $\tilde{Y}$ comprising two independent components $(\tilde{Y}_1,\tilde{Y}_2)$, with $R_j=\max_{p(\tilde{x}_j)} I(\tilde{X}_j;\tilde{Y}_j)$, $j \in \{1,2\}$. Substituting $V_{1}=\tilde{X}_1$ and $V_{2}=\tilde{X}_2$ such that $\tilde{X}_1$ and $\tilde{X}_2$ are independent of $(U_1,U_2,X_1,X_2,W,Y)$, Theorem~\ref{thm:encsideIBnoisy} specializes to the set of constraints
\begin{align*} 
R_1 &\geq  I(U_1;X_1|U_2,W,T)\\
R_2 &\geq  I(U_2;X_2|U_1,W,T)\\
R_1+R_2 &\geq I(U_1,U_2;X_1,X_2|W,T)\\
R_{01} &\geq I(U_1;X_1,X_2,Y|W,T)-I(U_1;U_2|W,T) \\
R_{02} &\geq I(U_2;X_1,X_2,Y|W,T)-I(U_1;U_2|W,T)
\end{align*}
\vspace{-8mm}
\begin{align*}
R_2+R_{01} &\geq I(U_1;X_1,X_2,Y|W,T)+I(U_2;X_2|U_1,W,T)\\
R_1+R_{02} &\geq I(U_2;X_1,X_2,Y|W,T)+I(U_1;X_1|U_2,W,T)
\end{align*}
\vspace{-8mm}
\begin{align*}
R_{01}+R_{02} &\geq I(U_1,U_2;X_1,X_2,Y|W,T),
\end{align*}
for some p.m.f. 
\begin{align*}
p(x_1,&x_2,w,t,u_{1},u_{2},y)=p(x_1,x_2,w)p(t)\biggl(\prod_{j=1}^2 p(u_{j}|x_j,t)\biggr) p(y|u_{1},u_{2},w,t)
\end{align*}
such that $\sum\limits_{u_{1},u_{2}} p(x_1,x_2,w,u_{1},u_{2},y|t) =q(x_1,x_2,w,y) \:\: \forall \:\: t.$
This is nothing but the inner bound of \cite[Theorem 1]{RamachandranOSIZS2024}.

\section{Proof of Auxiliary Cardinality Bounds}\label{app:cardbndnoisy} 
The cardinality bound $|\mathcal{T}| \leq 3$ on $T$ follows using Caratheodory's theorem~\cite[Appendix A]{el2011network}. Clearly, to maintain the terms $I(U_1;X_1|W,T)-H(\tilde{Y}_1|T)$, $I(U_2;X_2|W,T)-H(\tilde{Y}_2|T)$ and $I(U_1;X_1,Y|X_2,W,T)-H(\tilde Y_1|\tilde Z,T)$ involved in the rate constraints and thus preserve the rate region, it is sufficient for the alphabet $\mathcal{T}$ to consist of three elements. Next, to bound the cardinalities of $(U_1,U_2)$, consider the rate region with $|\mathcal{T}|=1$:
\begin{align}
H(\tilde{Y}_1) &\geq I(U_1;X_1|W) \label{eq:T3r1} \\
H(\tilde{Y}_2) &\geq I(U_2;X_2|W) \label{eq:T3r2} \\
R_{01} &\geq I(U_1;X_1,Y|X_2,W)-H(\tilde Y_1|\tilde Z), \label{eq:T3r3}
\end{align}
for some p.m.f. 
\begin{align}\label{p.m.f.structure1}
p(x_1,x_2,w,u_1,u_2,\tilde{x}_1,\tilde{x}_2,\tilde z,y) &=p(w)p(x_1|w)p(x_2|w) \biggl(\prod_{j=1}^2 p(u_{j}|x_j)\biggr) \nonumber\\
&\hspace{24pt}\times p(\tilde x_1)p(\tilde x_2)p(\tilde z|\tilde x_1,\tilde x_2)p(y|u_1,u_2,w)
\end{align}
such that 
\begin{align}
\Big\lVert \sum_{u_1,u_2,\tilde{x}_1,\tilde{x}_2,\tilde z} &p(x_1,x_2,w,u_1,u_2,\tilde{x}_1,\tilde{x}_2,\tilde z,y)- q(x_1,x_2,w,y) \Big\rVert\leq \epsilon. \label{eq:pmfmarginalize}
\end{align}
Denote this region as $\mathcal{C}$. 
The auxiliary cardinality bounds will be obtained using the perturbation argument of \cite{gohari2012evaluation}. Firstly, similar to \cite[Appendix E]{kurri2022multiple}, one can show that
\begin{align}
\mathcal{C} = \textup{Closure}\left(\underset{N_{1},N_{2} \geq 0}{\bigcup} \mathcal{C}^{N_{1},N_{2}}\right),
\end{align}
where the region $\mathcal{C}^{N_1,N_2}$, defined for positive integers $N_1$ and $N_2$, is characterized by \eqref{eq:T3r1}--\eqref{eq:T3r3} with respect to random variables $(X_1,X_2,W,U_1,U_2,\tilde{X}_1,\tilde{X}_2,\tilde Z,Y)$ subject to cardinality bounds $|\mathcal{U}_{1}| \leq N_1$ $\&$ $|\mathcal{U}_{2}| \leq N_2$ and a joint p.m.f. satisfying \eqref{p.m.f.structure1} such that the total variation constraint $\lVert p(x_1,x_2,w,y)- q(x_1,x_2,w,y)\rVert\leq \epsilon$ holds.

We next proceed to demonstrate that if $|\mathcal{U}_{1}| \leq N_1$ and $|\mathcal{U}_{2}| \leq N_2$ for some constants $N_1$ and $N_2$, then it is possible to reduce the auxiliary cardinalities to the ones mentioned.
To achieve this, we can focus on optimizing the weighted sum term  $\lambda_1 I(U_1;X_1|W)+\lambda_2 I(U_2;X_2|W)+\lambda_3 I(U_1;X_1,Y|X_2,W)$ for non-negative reals $(\lambda_1,\lambda_2,\lambda_3)$, and introduce new auxiliary random variables with cardinalities bounded as mentioned, while ensuring that the weighted sum is not increased and \eqref{p.m.f.structure1} is satisfied along with $\lVert p(x_1,x_2,w,y)- q(x_1,x_2,w,y)\rVert\leq \epsilon$.

For a given p.m.f. $p(x_1,x_2,w,u_1,u_2,\tilde{x}_1,\tilde{x}_2,\tilde{z},y)$, define the following perturbation
\begin{align*}
&p_{\epsilon}(x_1,x_2,w,u_1,u_2,\tilde{x}_1,\tilde{x}_2,\tilde z,y) =p(x_1,x_2,w,u_1,u_2,\tilde{x}_1,\tilde{x}_2,\tilde z,y)\left(1+\epsilon \phi(u_1)\right).
\end{align*}
Clearly, one must have $\left(1+\epsilon \phi(u_1)\right) \geq 0$ for all $u_1$ along with $\sum_{u_1}p(u_1)\phi(u_1)=0$ so that the distribution $p_{\epsilon}(x_1,x_2,w,u_1,u_2,\tilde{x}_1,\tilde{x}_2,\tilde z,y)$ is a valid p.m.f. Moreover, let us restrict our perturbation $\phi(u_1)$ to satisfy the condition
\begin{align} \label{eq:perturbT3}
&\mathbb{E}\bigl[\phi(U_1)|X_1=x_1,X_2=x_2,W=w,Y=y\bigr] \nonumber\\
&\phantom{www}= \sum_{u_1} p(u_1|x_1,x_2,w,y) \phi(u_1) = 0, \:\: \forall \:\: x_1, x_2, w,y.
\end{align}
We note that a non-zero perturbation satisfying \eqref{eq:perturbT3} exists as long as $|\mathcal{U}_{1}| > |\mathcal{X}_1||\mathcal{X}_2||\mathcal{W}||\mathcal{Y}|$, because in this case the null-space of the constraints has rank not exceeding $|\mathcal{X}_1||\mathcal{X}_2||\mathcal{W}||\mathcal{Y}|$. Notice that such a perturbation satisfying \eqref{eq:perturbT3} preserves the joint distribution of $(X_1,X_2,W,Y)$. Indeed, we have
\begin{align}
p_{\epsilon}(x_1,x_2,w,y) &= \sum_{u_1,u_2} p(x_1,x_2,w,u_1,u_2,y)\left(1+\epsilon \phi(u_1)\right) \notag\\
&= p(x_1,x_2,w,y) \{1+\epsilon \sum_{u_1} p(u_1|x_1,x_2,w,y) \phi(u_1) \} \nonumber\\
&= p(x_1,x_2,w,y), \label{eq:pertpreserveQ}
\end{align}
where the final step results from \eqref{eq:perturbT3}.

Next, let us prove that the perturbed distribution $p_{\epsilon}(\cdot)$ also preserves the p.m.f. structure in \eqref{p.m.f.structure1}. For this, consider
\begin{align}
p_{\epsilon}(x_1,x_2,w,u_1,u_2,\tilde{x}_1,\tilde{x}_2,\tilde z,y) &=p(x_1,x_2,w,u_1,u_2,\tilde{x}_1,\tilde{x}_2,\tilde z,y)\left(1+\epsilon \phi(u_1)\right) \notag\\
&= p(x_1)p(w|x_1)\left\{p(u_1|x_1)\left(1+\epsilon \phi(u_1)\right)\right\}p(\tilde x_1)p(x_2|w)\nonumber\\
&\hspace{0.5cm}\times p(u_2|x_2) p(\tilde x_2) p(\tilde z|\tilde x_1,\tilde x_2)p(y|u_1,u_2,w). \label{cardbnd1e}
\end{align}
Summing over $(x_2,w,u_2,\tilde{x}_1,\tilde{x}_2,\tilde z,y)$, we have the marginal
\begin{align}
p_{\epsilon}(x_1,u_1)&=p(x_1)p(u_1|x_1)\left(1+\epsilon \phi(u_1)\right).
\end{align}
This yields $p_{\epsilon}(u_1|x_1)=p(u_1|x_1)\left(1+\epsilon \phi(u_1)\right)$, because $p_{\epsilon}(x_1)=p(x_1)$ by invoking \eqref{eq:pertpreserveQ}. Hence, we may rewrite \eqref{cardbnd1e} as follows
\begin{align}
p_{\epsilon}(x_1,x_2,w,u_1,u_2,\tilde{x}_1,\tilde{x}_2,\tilde z,y) &=p(w)p(x_1|w)p(x_2|w)p_{\epsilon}(u_1|x_1)p(\tilde x_1) \nonumber\\
&\hspace{1cm}\times p(u_2|x_2)p(\tilde x_2) p(\tilde z|\tilde x_1,\tilde x_2)p(y|u_1,u_2,w), \label{eq:pmfuxz}
\end{align}
so that $p_{\epsilon}(\cdot)$ preserves the p.m.f. structure in \eqref{p.m.f.structure1}.

If the distribution $p(x_1,x_2,w,u_1,u_2,\tilde{x}_1,\tilde{x}_2,\tilde z,y)$ minimizes the objective function $\lambda_1 I(U_1;X_1|W)+\lambda_2 I(U_2;X_2|W)+\lambda_3 I(U_1;X_1,Y|X_2,W)$, then any valid perturbation $p_{\epsilon}(x_1,x_2,w,u_1,u_2,\tilde{x}_1,\tilde{x}_2,\tilde z,y)$ must satisfy the following first derivative extremality condition:
\begin{align}
&\frac{d}{d \epsilon} \biggl(\lambda_1 I_{\epsilon}(U_1;X_1|W)+\lambda_2 I_{\epsilon}(U_2;X_2|W) +\lambda_3 I_{\epsilon}(U_1;X_1,Y|X_2,W)\biggr) \Big|_{\epsilon=0} = 0, \label{eq:extrempert}
\end{align}
wherein $I_{\epsilon}(\cdot)$ is used to emphasize the fact that these mutual information terms are evaluated under $p_{\epsilon}(\cdot)$.
Let us consider the weighted sum term under $p_{\epsilon}(\cdot)$.
\begin{align}
&\lambda_1 I_{\epsilon}(U_1;X_1|W)+\lambda_2 I_{\epsilon}(U_2;X_2|W)+\lambda_3 I_{\epsilon}(U_1;X_1,Y|X_2,W) \notag\\
&\stackrel{(a)}= \lambda_1\left(H(X_1,W)-H(W)+H_{\epsilon}(U_1,W)-H_{\epsilon}(U_1,X_1,W)\!\right)\nonumber\\
&\phantom{w}+\lambda_2\left(H(X_2,W)-H(W)+H_{\epsilon}(U_2,W)-H_{\epsilon}(U_2,X_2,W)\!\right) \notag\\
&\phantom{w}+\lambda_3 \Bigl(H(X_1,X_2,W,Y)-H(X_2,W)+H_{\epsilon}(U_1,X_2,W)-H_{\epsilon}(U_1,X_1,X_2,W,Y)\Bigr) \notag\\
&\stackrel{(b)}= \lambda_1\Big(H(X_1,W)-H(W)+H(U_1,W)+\epsilon H_{\phi}(U_1,W)-H(U_1,X_1,W)-\epsilon H_{\phi}(U_1,X_1,W)\Big)\nonumber\\
&\phantom{w}+\lambda_2\left(H(X_2,W)-H(W)+H_{\epsilon}(U_2,W)-H_{\epsilon}(U_2,X_2,W)\!\right) \notag\\
&\phantom{w}+\lambda_3\Big(H(X_1,X_2,W,Y)-H(X_2,W)+H(U_1,X_2,W)+\epsilon H_{\phi}(U_1,X_2,W)\nonumber\\
&\phantom{www}-H(U_1,X_1,X_2,W,Y)\!-\!\epsilon H_{\phi}(U_1,X_1,X_2,W,Y)\Big)\!, \label{eq:wsumT3}
\end{align}
where (a) follows from \eqref{eq:pertpreserveQ} since the joint p.m.f. of $(X_1,X_2,W,Y)$ is preserved, while (b) results by defining
\begin{align*}
&H_{\phi}(U_1,X_1,X_2,W,Y) = -\sum_{u_1,x_1,x_2,w,y} p(u_1,x_1,x_2,w,y) \phi(u_1) \log p(u_1,x_1,x_2,w,y),
\end{align*}
and likewise for the other $H_{\phi}(\cdot)$ terms.

Now, summing over $(x_1,u_1,\tilde{x}_1,\tilde{x}_2,\tilde z,y)$ in \eqref{eq:pmfuxz},
\begin{align}
p_{\epsilon}(x_2,u_2,w) = p(x_2,u_2,w).
\end{align}
This means that the terms $H_{\epsilon}(U_2,X_2,W)$ and $H_{\epsilon}(U_2,W)$ are preserved under the perturbation $p_{\epsilon}(\cdot)$.
Therefore, we can rewrite the weighted sum in \eqref{eq:wsumT3} as:
\begin{align}
&\lambda_1 I_{\epsilon}(U_1;X_1|W)+\lambda_2 I_{\epsilon}(U_2;X_2|W) +\lambda_3 I_{\epsilon}(U_1;X_1,Y|X_2,W) \notag\\
&= \lambda_1\left(I(U_1;X_1|W)+\epsilon H_{\phi}(U_1,W)-\epsilon H_{\phi}(U_1,X_1,W)\right) \nonumber\\
&\hspace{12pt}+\lambda_2 I(U_2;X_2|W)+\lambda_3 \Bigl(I(U_1;X_1,Y|X_2,W)+\epsilon H_{\phi}(U_1,X_2,W)-\epsilon H_{\phi}(U_1,X_1,X_2,W,Y)\Bigr). \label{eq:wsum2T3}
\end{align}
We invoke the first derivative condition \eqref{eq:extrempert} in \eqref{eq:wsum2T3} to obtain
\begin{align}
&\lambda_1\left(H_{\phi}(U_1,W)-H_{\phi}(U_1,X_1,W)\right) +\lambda_3\left(H_{\phi}(U_1,X_2,W)-H_{\phi}(U_1,X_1,X_2,W,Y)\right) = 0. \label{eq:wsum5T3} 
\end{align}
Finally, \eqref{eq:wsum5T3} and \eqref{eq:wsum2T3} together imply
\begin{align*}
&\lambda_1 I_{\epsilon}(U_1;X_1|W)+\lambda_2 I_{\epsilon}(U_2;X_2|W) +\lambda_3 I_{\epsilon}(U_1;X_1,Y|X_2,W) \notag\\
&\phantom{w}= \lambda_1 I(U_1;X_1|W)+\lambda_2 I(U_2;X_2|W) +\lambda_3 I(U_1;X_1,Y|X_2,W).
\end{align*}
Thus, if $p(x_1,x_2,w,u_1,u_2,\tilde{x}_1,\tilde{x}_2,\tilde z,y)$ achieves the minimum of the weighted sum rate, then this minimum is preserved for any valid perturbation $p_{\epsilon}(x_1,x_2,w,u_1,u_2,\tilde{x}_1,\tilde{x}_2,\tilde z,y)$ that satisfies equation~\eqref{eq:perturbT3}. We can pick $\epsilon$ that satisfies
\begin{align*}
\min_{u_1} \left(1+\epsilon \phi(u_1)\right) = 0,
\end{align*}
with $u_1=u_1^{*}$ being the minimizer above. It is immediate that $p_{\epsilon}(u_1^{*})=0$, so that there exists a $U_1$ with cardinality not exceeding $|\mathcal{U}_{1}|-1$ such that the weighted sum $\lambda_1 I(U_1;X_1|W)+\lambda_2 I(U_2;X_2|W)+\lambda_3 I(U_1;X_1,Y|X_2,W)$ is preserved. In other words, the cardinality of $U_1$ drops by $1$. This may continue until $|\mathcal{U}_{1}|=|\mathcal{X}_1||\mathcal{X}_2||\mathcal{W}||\mathcal{Y}|$, at which point, the existence of a non-zero perturbation $\phi(u_1)$ satisfying \eqref{eq:perturbT3} is no longer guaranteed. Thus, the cardinality of $U_1$ may be limited such that $|\mathcal{U}_{1}| \leq |\mathcal{X}_1||\mathcal{X}_2||\mathcal{W}||\mathcal{Y}|$. 

The bound on $|\mathcal{U}_2|$ can be shown similarly, by perturbing $U_2$. For a given $p(x_1,x_2,w,u_1,u_2,\tilde{x}_1,\tilde{x}_2,\tilde z,y)$, consider another perturbation 
\begin{align*}
&p'_{\epsilon}(x_1,x_2,w,u_1,u_2,\tilde{x}_1,\tilde{x}_2,\tilde z,y) =p(x_1,x_2,w,u_1,u_2,\tilde{x}_1,\tilde{x}_2,\tilde z,y)\left(1+\epsilon \phi'(u_2)\right).
\end{align*}
Furthermore, we consider perturbations $\phi'(u_2)$ such that
\begin{align} \label{eq:perturbT3U2}
&\mathbb{E}\bigl[\phi'(U_2)|U_1=u_1,\!X_1=x_1,\!X_2=x_2,\!W=w,\!Y=y\bigr] \nonumber\\
&\phantom{www}= \sum_{u_2} p(u_2|u_1,\!x_1,\!x_2,\!w,\!y) \phi'(u_2) = 0, \: \forall \: u_1, x_1,x_2,w,y.
\end{align}
Clearly, a non-zero perturbation that obeys \eqref{eq:perturbT3U2} exists provided $|\mathcal{U}_2| > |\mathcal{U}_{1}||\mathcal{X}_1||\mathcal{X}_2||\mathcal{W}||\mathcal{Y}|$.
Similar to the analysis for the perturbation $\phi(u_1)$ earlier, one may verify that $p'_{\epsilon}(\cdot)$  preserves $p(x_1,x_2,w,y)$ as well as the structure of the p.m.f. in \eqref{p.m.f.structure1}. With this, it can be verified similarly that the weighted sum term under the perturbed distribution $p'_{\epsilon}(\cdot)$ is preserved. 
To complete the proof, we choose $\epsilon$ such that $\min_{u_2} \left(1+\epsilon \phi'(u_2)\right) = 0$, which results in reducing the cardinality of $U_2$ by $1$. We can repeat this process iteratively, and thus $|\mathcal{U}_2| \leq |\mathcal{U}_{1}||\mathcal{X}_1||\mathcal{X}_2||\mathcal{W}||\mathcal{Y}|$.

\bibliographystyle{IEEEtran}
\bibliography{mylit}
\nocite{ramachandran2018communication}
\nocite{ramachandran2017feedback}
\end{document}